\DeclareMathOperator*{\argmax}{argmax}
\newtheorem{theorem}{Theorem}
\newtheorem{definition}{Definition}
\begin{document}
\title{A Generalised Differential Framework for Measuring Signal Sparsity}

\author{Anastasios~Maronidis, Elisavet~Chatzilari, Spiros~Nikolopoulos and Ioannis~Kompatsiaris}
\maketitle

\begin{abstract}
The notion of signal sparsity has been gaining increasing interest in information theory and signal processing communities. 
Recent advances in fields like signal compression, sampling and analysis have accentuated the crucial role of sparse representations of signals. 
As a consequence, there is a strong need to measure sparsity and towards this end, a plethora of metrics has been presented in the literature.
The appropriateness of these metrics is typically evaluated against a set of 
objective criteria that has been proposed for assessing the credibility of any sparsity metric.
In this paper, we propose a Generalised Differential Sparsity (GDS) framework for generating novel sparsity metrics whose functionality is based on the concept that sparsity is encoded in the differences among the signal coefficients. 
We rigorously prove that every metric generated using GDS satisfies all the aforementioned criteria and we provide a computationally efficient formula that makes GDS suitable for high-dimensional signals. 
The great advantage of GDS is its flexibility to offer sparsity metrics that can be well-tailored to certain requirements stemming from the nature of the data and the problem to be solved.
This is in contrast to current state-of-the-art sparsity metrics like Gini Index (GI),
which is actually proven to be only a specific instance of GDS, demonstrating the generalisation power of our framework. 
In verifying our claims, we have incorporated GDS in a stochastic signal recovery algorithm and experimentally investigated its efficacy in reconstructing randomly projected sparse signals. 
As a result, it is proven that GDS, in comparison to GI, both loosens the bounds of the assumed sparsity of the original signals and reduces the minimum number of projected dimensions, required to guarantee an almost perfect reconstruction of heavily compressed signals.
The superiority of GDS over GI in conjunction with the fact that the latter is considered as a standard in numerous scientific domains, prove the great potential of GDS as a general purpose framework for measuring sparsity.
\end{abstract}

\begin{IEEEkeywords}
Signal sparsity, Signal compressibility, Differential sparsity, Signal reconstruction.
\end{IEEEkeywords}

\IEEEpeerreviewmaketitle

\section{Introduction}
\label{Intro}

\IEEEPARstart{S}{parse} representation of signals has been celebrated as a premise that permits the solution of problems previously unsolvable, paving the way to unprecedented possibilities in fields like signal compression and reconstruction.
Roughly speaking, sparsity measures the extent to which the information of a signal is distributed to the coefficients.
More specifically, for highly sparse signals the information is concentrated to a small portion of coefficients, while for non-sparse signals the information is uniformly distributed across the coefficients.
In this context, sparsity is a desirable property as it allows for succinct representations of large pieces of information. 
Recall the Occam's razor, which dictates that among a set of representations, the most compact is always preferred \cite{hamilton1855discussions}.

There are many paradigms stemming from diverse research domains advocating the importance of sparsity.
Compressive Sampling (CS) \cite{candes2006compressive} comprises the most vivid example, where the role of sparsity has been demonstrated in the process of compressing and reconstructing a signal. 
More specifically, through the introduction of the Null Space Property (NSP) \cite{baraniuk2011introduction} and the Restricted Isometry Property (RIP) \cite{cande2008introduction}, it has been proven that under the assumption of data sparsity, it is possible to solve an underdetermined linear system of equations. 
The above important result allows for the perfect reconstruction of  a signal that has been compressed using only few random projections of the original sparse signal. 
Towards this end, a variety of optimisation algorithms, which incorporate the notion of sparsity, has been proposed for reconstructing a compressed signal.
For instance, the Dantzig selector solves an $l_1$-regularisation problem in an attempt to estimate a ground truth sparse signal from few noisy projections of this signal \cite{candes2007dantzig}. 
In a similar vein, sparsity has also been utilised in the Lasso algorithm for recovering sparse representations of high-dimensional signals \cite{meinshausen2009lasso}. 

Apart from the aforementioned applications, the notion of sparsity has also been incorporated in already existing methods in various fields. 
For instance, Bayesian methods providing sparse solutions to regression and classification problems have attracted a renewed interest \cite{tipping2001sparse}. 
Moreover, in Support Vector Machines (SVM), optimal guarantees on the sparsity of the support vector set encoding the boundary between two classes, have also been investigated \cite{cotter2013learning}.
Sparsity appears to play a key role in boosting techniques as well, leading to sparse combinations of a number of weak classifiers \cite{xi2009speed}.  
Additionally, in an unsupervised configuration, Sparse Principal Component Analysis (S-PCA) has been introduced as a framework, which trades off redundancy minimisation for sparsity maximisation in the basis signals \cite{chennubhotla2001sparse}. 

The above comprise only indicative examples from an endless catalogue of diverse scientific domains, where signal sparsity finds application. 
Nevertheless, they are fairly enough to demonstrate that the notion of sparsity occupies a dominant position in signal processing and information theory communities. 
Given the importance of sparsity, it is essential to find an effective way to measure it.
Apparently, the way sparsity is defined and measured is dictated by the specific purpose it is designed to serve. 
In this paper, in the context of CS, we are particularly concerned with the role of sparsity in the reconstruction of signals which have been heavily compressed using random projections.
Signal reconstruction covers a large portion of the problems that concern sparsity. 
Hence, the conclusions drawn from our analysis are expected to have impact on other case studies as well.

Formally, the core idea of sparsity, as this has originally been introduced in CS, is to count the integer number of non-zero coefficients of a signal, measured with the help of the $l_0$ norm \cite{candes2006compressive}.
In practice though, this proves to be a very strict definition, as rarely in real-world problems signals contain exact zeros. 
As a consequence, the research community has resorted to new relaxed measures of sparsity whose actual objective is to estimate an approximation of the number of non-zero coefficients, allowing sparsity to take decimal values.
Along these lines, the notion of sparsity is usually referred to as signal compressibility instead.
From now on though, in our work, we will consistently use the term sparsity even in the cases where we will actually refer to signal compressibility. 

There has been proposed a variety of sparsity metrics in the literature \cite{hurley2009comparing}.
Among them, Gini Index (GI) \cite{hurley2009comparing,gini1921measurement} offers a state-of-the-art solution, which has led to impressive results in reconstructing compressed signals \cite{zonoobi2011gini}.
The validity of the majority of sparsity metrics often relies merely on intuitive criteria. 
To overcome this drawback, in \cite{hurley2009comparing, pastor2013sparsity}, a number of objective criteria has been proposed. 
The origin of these criteria stems from the financial science, where the notion of sparsity is analogous to the inequity of wealth distribution in a human society \cite{dalton1920measurement}. 
The aforementioned criteria provide a degree of credibility to a sparsity metric enabling the comparison between different metrics.

Most of the already existing sparsity metrics, e.g., $l_1$ norm, use the magnitude of the vector coefficients to encode sparsity. 
A drawback though is that in this way the relativity among the coefficients is completely defied.
In this paper, we claim that in the process of measuring the sparsity of a vector, it is not the absolute value of the coefficients which is important, but their relative differences. 
The rationale of our claim relies upon the observation that how small or large a value is, depends on what reference value it is compared to. 
Therefore, by calculating the relative differences, we actually compare the coefficients to each other.
The following example elaborates on why relativity of coefficients might prove to be crucial.

Considering $\mathbf{c} = [10^{-10}, 10^{-10}, 10^{-10}, 10^{-10}, 10^{-10}]$,
two questions naturally emerge.
How close are the coefficients of $\mathbf{c}$ to zero and thereby how sparse is $\mathbf{c}$?
The answer to the first question is that they are all equally close to zero, regardless of what is their actual distance from it.
In this sense, all coefficients contain equal percentage of the signal energy, which means that all of them are equally important in the representation of the signal and hence none of them should be discarded. 
As a consequence, although at a first glance it may look counter-intuitive, the answer to the second question is clearly that $\mathbf{c}$ is totally non-sparse.

Motivated by the previous analysis, as the main contribution of this paper, we propose a Generalised Differential Sparsity (GDS) framework, whose functionality is based on the differences among the signal coefficients. 
The advantageous feature of our framework is that it is customisable to certain data types and problem requirements, due to an adjustable parameter, which from now on we call the order of GDS. 
Different values of the order generate novel metric-instances with varying strictness in estimating signal sparsity.
As part of our analysis, we rigorously prove that these GDS metric-instances satisfy all the objective criteria for sparsity metrics \cite{hurley2009comparing, pastor2013sparsity}.
Moreover, we prove that GDS of first order collapses to GI. 
The encapsulation of GI within GDS emphasises the generalisation power of the latter in unifying already existing metrics apart from generating novel ones.
In addition, although the computation of GDS using its original formula is tractable even for large values of its order, it proves to be cumbersome for high-dimensional data. 
For dealing with the above shortcoming, we provide an equivalent formula of GDS, which allows for its efficient calculation when the number of dimensions is high.
The drawback though of the latter formula is that in contrast to the original one, it is costly for big values of the order of GDS. 
Consequently, both formulas prove to be useful and can be used interchangeably according to the given circumstances.

The order of GDS determines the tendency of the corresponding metric-instance to qualify an arbitrary signal as sparse. 
This proves to be a great advantage, since it offers GDS the flexibility to adjust to certain requirements arising from the nature of the data and the problem to be solved. 
In order to verify the above claim, we have used GDS to reconstruct 
sparse signals which have been heavily compressed via random projections. 
For this purpose we have employed the reconstruction approach presented in \cite{zonoobi2011gini}, which combined with GI has returned excellent results.
The reconstruction is performed by incorporating a sparsity metric into a stochastic approximation method that solves a dedicated sparsity maximisation problem. 
More specifically, given a compressed signal and based on the prior assumption that the original signal before compression was sparse, the idea is to find in the original space, the signal with highest sparsity that gives the smallest reconstruction error. 

Through an experimental study similar to the one presented in\cite{zonoobi2011gini}, we prove that incorporating GDS to the previous reconstruction approach, in comparison with the top performing GI, loosens the assumptions of both the underlying sparsity of the original signal and the required number of projected dimensions.
In other words, GDS offers further compression capacity to lowly sparse signals and simultaneously allows for using a smaller number of projected dimensions without increasing the reconstruction error.
Along the same lines, it is proven that the optimal order of GDS is strongly dependent on the type and sparsity of the original data as well as the desired compression level. 
This finding justifies the rationale behind using different values for the order of GDS and provides a useful rule of thumb in deciding what order of GDS is the appropriate for certain problem parameters.

The remainder of this paper is organised as follows: In Section \ref{RelatedWork}, a number of works related to measuring signal sparsity is reviewed. 
In Section \ref{Properties}, we present a set of desirable criteria a sparsity metric must obey and we give intuitive interpretations of them. 
In Section \ref{GDSp}, we propose the novel GDS framework and we rigorously prove that it satisfies all the aforementioned criteria. 
In Section \ref{Gini}, we prove that GI is encapsulated within our generalised framework. 
In Section \ref{Efficient}, we provide a computationally more efficient formula for calculating the GDS of a signal. 
A normalised version of GDS based on the statistics of the data is also presented in Section \ref{NormalGDS}.
Subsequently, through a series of experiments, in Section \ref{Experiments}, we study GDS signal reconstruction performance as a function of the type and the sparsity of the original signals, the number of projected dimensions and the GDS order.
We conclude this paper in Section \ref{Conclusion}. Finally, an appendix containing the most lengthy mathematical proofs is provided at the end of this paper.

\section{Related Work}
\label{RelatedWork}

A variety of methods has already been proposed in the bibliography for measuring the inherent sparsity of a signal. 
The most straightforward way to measure sparsity is by using the $l_0$-norm \cite{karvanen2003measuring}. 
Although it has led to impressive theoretical results in sparse representation \cite{opac-b1133206}, in practice the $l_0$-norm suffers from several disadvantages. 
For instance, changing the non-zero coefficients does not affect it, while even an infinitesimal amount of noise on the zero coefficients may dramatically distort it.
For overcoming such disadvantages, approximations of the $l_0$-norm have been proposed in sparsity optimisation problems in the presence of noise \cite{fuchs2005recovery,donoho2006stable}. 
Thresholding techniques have also been employed \cite{rath2008sparse}, however the selection of a reasonable threshold may prove to be problematic. 

Due to its disadvantages, the $l_0$-norm has often been replaced by the $l_1$-norm, which offers a plausible alternative metric surpassing some of the shortcomings accompanying the former \cite{candes2005decoding,donoho2008fast}. 
Towards this direction, \cite{candes2005decoding} comprises a milestone work, where the authors prove that the classical error correcting problem, under certain conditions can be translated into an $l_1$-optimisation problem. 
The latter can be trivially solved in a linear programming configuration using existing dedicated methods. 
In a similar vein, the authors of \cite{donoho2008fast} employ the Homotopy method to solve an underdetermined system of linear equations through an $l_1$-minimisation problem. 
In \cite{candes2008enhancing}, the authors propose a methodology for sparse signal recovery that often outperforms the $l_1$-minimisation problem by reducing the number of measurements required for perfect reconstruction of the compressed signal. 
The problem is decomposed into a sequence of $l_1$-minimisation sub-problems, where the weights are updated at each iteration based on the previous solution. 

Norms $l_p$ of higher order have also been used for measuring signal sparsity \cite{karvanen2003measuring}. 
Moreover, combinations of $l_p$ norms have been proposed as well. 
For instance, the Hoyer sparsity metric based on the relationship between the $l_1$ and the $l_2$ norm has been utilised in a sparsity constrained Non-negative Matrix Factorisation (NMF) setting for finding linear representations of non-negative data \cite{hoyer2004non}. 
Apart from the $l_p$ norms, other mathematical functions have also been used for measuring signal sparsity. 
For example, kurtosis has been proposed for data following a unimodal and symmetric distribution form \cite{olshausen2004sparse}. 
In the same vein, the authors or \cite{karvanen2003measuring} suggest the adoption of $tanh$ functions as an approximate solution of $l_p$ norms. 
Furthermore, they introduce a metric based on order statistics. 
In contrast to $l_p$ norms and similarly to our work, the functionality of such methods is based on the distribution form of the signal coefficients rather than their magnitudes.
A main drawback though is that they can only handle signals whose coefficients contain a unique dominant mode at zero, and thus must be avoided when dealing with signals containing multiple modes, which constrains their scope of applications \cite{karvanen2003measuring}. 

The connection between sparsity and entropy has been clearly demonstrated in \cite{pastor2013sparsity}.
Entropy expresses the complexity of a signal, while sparsity expresses its compressibility under appropriate basis.
Along these lines, the authors argue that both sparsity and entropy should follow similar intuitive criteria. 
Towards this end, they propose a novel sparsity and a novel entropy metric that satisfy such criteria.
The functionality of these metrics is based on the calculation of the similarity of a signal to the theoretically totally non-sparse one using the inner product between them.
Relying on the above connection, entropy diversity metrics can also be used to measure sparsity \cite{rao1999affine,kreutz1998measures}. 
For instance, the Shannon and the Gaussian entropy presented in \cite{rao1999affine} constitute plausible measures of sparsity, which incorporated in sparsity minimisation problems may lead to sparse solutions to the best basis selection problem \cite{kreutz1998measures}.

Among the most prevalent sparsity metrics, the Gini Index (GI) \cite{hurley2009comparing} offers a state-of-the-art solution in a variety of applications.
As a couple of examples, it has been effectively used for maximising the sparsity of wavelet representations via parameterised lifting \cite{hurley2007maximizing} as well as for finding the most sparse representation of speech signals \cite{rickard2004gini}.
Moreover, in relation to our work, GI has shown top performance in recovering randomly projected signals \cite{zonoobi2011gini}.
Therefore, a comparison with it is mandatory to prove the potential of our framework.
In this context, it is worth noticing that both GI and our framework build on common incentives.
More specifically, for both methods, what is important is the relativity among coefficients rather than their absolute magnitudes.
Actually as we will see later in this paper, our framework shares all the advantages accompanying GI plus the extra advantage that contains an adjustable parameter, which makes our method a sparsity metric generator instead of a single sparsity metric.

Most of the sparsity metrics mentioned in this section along with others, like for instance the \emph{log} measure or the $u_{\theta}$, have been collectively reviewed and compared in \cite{hurley2009comparing}. 
Although each of them has its own advantages, a number of objective criteria has been proposed in the literature for assessing their performance. 
This set of criteria serves as a benchmark that allows for comparing different metrics to each other. 
In \cite{hurley2009comparing}, the authors summarise which of the sparsity criteria are satisfied by each of the sparsity metrics proposed in the literature. 
Interestingly, the only metric that satisfies all criteria is GI. 
In this paper we will prove that our proposed framework does as well. 
In the following section, we provide the full set of the above objective criteria, acquired from the literature \cite{hurley2009comparing, pastor2013sparsity, dalton1920measurement}.

\section{Sparsity metric objective criteria}
\label{Properties}

For the remainder of this paper, we will borrow the term vector from linear algebra in order to refer to a signal. 
So, let
\begin{equation*}
\mathbf{c} = \left[ c_1,\dots,c_N \right] \in \mathbb{R}^N 
\end{equation*}
be an $N$-length vector whose sparsity we would like to measure. A sparsity metric is a function $S: \mathbb{R}^N \rightarrow \mathbb{R}$, which given 
$\mathbf{c}$
returns a real number $S(\mathbf{c})$ that comprises an estimation of its sparsity. From now onwards, we implicitly assume that sparsity is measured using the magnitudes of the coefficients and not their algebraic values, i.e., the coefficient signs can be neglected. Therefore, for simplicity, we can assume that 
$$c_i \ge 0, \, \forall \, i \in \{ 1, 2, \dots, N \}.$$

In the following, we give the mathematically rigorous definitions of the above-mentioned sparsity criteria along with intuitive interpretations. 

\begin{itemize}

\item \emph{$P_1$: Continuity}
\begin{equation}
S(\mathbf{c} + d \mathbf{c}) \rightarrow S(\mathbf{c}), \,\, \text{when} \,\, d \mathbf{c} \rightarrow \mathbf{0}.
\end{equation}
This property requires that small changes of the coefficients should not lead to dramatic change of sparsity.

\item \emph{$P_2$: Permutation Invariance}
\begin{equation}
S(t(\mathbf{c})) = S(\mathbf{c}), \,\, \text{where} \,\, t(\mathbf{c}) \,\, \text{is a permutation of} \,\, \mathbf{c}.
\end{equation}
This property postulates that permuting the coefficients of a vector should not affect its sparsity. 
Provided that this property holds for a metric -- which is usually the case -- for convenience and without loss of generality, given an arbitrary vector, since the position of the coefficients does not matter, 
we can consider that these have been sorted in ascending order, i.e.: 
\begin{equation}
0 \le c_1 \le c_2 \le \dots \le c_N.
\end{equation}

\item \emph{$P_3$: Robin Hood}

Let 
$$\mathbf{c} = \left[ c_1, c_2, \dots, c_N \right].$$
If 
$$\mathbf{c}' = \left[ c_1,\dots,c_i+a,\dots,c_j-a,\dots,c_N \right],$$
then 
\begin{equation*}
S(\mathbf{c}') < S(\mathbf{c}), \text{\, for all \,} a,c_i,c_j, \text{\, such that \,} 
\end{equation*}
\begin{equation*}
c_j>c_i \text{\, and \,} 0<a<\frac{c_j-c_i}{2}.
\end{equation*}
Robin Hood property says that subtracting a specific amount from a large coefficient and adding this amount to a smaller coefficient decreases the vector-sparsity as the energy of the vector spreads out along the coefficients.
The constraint of $a$ is used to avoid $c'_i>c'_j$.

\item \emph{$P_4$: Scaling}
\begin{equation*}
S(a\mathbf{c}) = S(\mathbf{c}), \text{\,\,} \forall a \in \mathbb{R}, \text{\,\,} a>0.
\end{equation*}
Scaling property requires that by multiplying all vector coefficients with the same scalar must not affect vector-sparsity. 

\item \emph{$P_5$: Rising Tide}
\begin{equation*}
S(\mathbf{c} + a) < S(\mathbf{c}), \text{\,\,} a \in \mathbb{R}, \text{\,\,} a>0, 
\end{equation*}
\begin{equation*}
\text{except for the case where} \text{\,\,} c_1 = c_2 = \dots = c_N.
\end{equation*}
Rising Tide says that adding the same scalar to all vector coefficients reduces vector-sparsity. The significance of this property becomes more obvious by examining the limit behaviour of the vector under this operation. Indeed, by adding an increasing amount to all coefficients, 
the relative difference among coefficients becomes negligible and
therefore the sparsity should
asymptotically become zero.

\item \emph{$P_6$: Cloning} 
\begin{equation*}
S(\mathbf{c}) = S(\mathbf{c} \| \mathbf{c}) = \dots = S(\mathbf{c} \| \mathbf{c} \| \dots \| \mathbf{c}), 
\end{equation*}
\begin{equation*}
\text{\, where \,} \| \text{\, denotes concatenation. \,}
\end{equation*}
Cloning requires that concatenating a number of vectors, which comprise exact copies of the original one must not affect vector-sparsity. This is also quite reasonable, if we take again into account that the relative difference of the vector coefficients after sorting the resulting vector is kept intact. 

\item \emph{$P_7$: Bill Gates}
\small
\begin{multline*}
\forall i \in \{ 1, 2, \dots, N \}, \text{\,\,} \text{and} \text{\,\,} \forall a>0: 
\\
S(\left[ c_1,\dots,c_i+a,\dots,c_n\right])
>S(\left[c_1,\dots,c_i,\dots,c_n\right]).
\end{multline*}
\normalsize
By increasing the value of a vector coefficient, while maintaining the remaining coefficients, the sparsity increases, as the vector energy is concentrated to a mere coefficient.

\item \emph{$P_8$: Babies}
\begin{equation*}
S(\mathbf{c} \| 0)>S(\mathbf{c}).
\end{equation*}
By adding extra zero's to the original vector, the vector-sparsity increases. This action has similar effect to $P_7$, since by adding zero's the energy is concentrated to fewer coefficients.

\item \emph{$P_9$: Saturation}
\begin{equation}
\lim_{N \rightarrow + \infty} \frac{S(\mathbf{0}_N || 1)}{S(\mathbf{0}_{N-1} || 1)} = 1.
\end{equation}
Saturation says that by concatenating extra zeros to a vector, the change of its sparsity asymptotically becomes negligible.

\item \emph{$P_{10}$: Lower Bound}
\begin{equation}
S(\mathbf{c}) \ge S(\mathbf{1}_N).
\end{equation}
The smallest possible sparsity is encoded to a vector consisting of ones.

\item \emph{$P_{11}$: Upper Bound}
\begin{equation}
S(\mathbf{c}) \le S(\mathbf{0}_{N-1} || 1).
\end{equation}
The largest possible sparsity is encoded to a vector consisting of all but one zeros.

\end{itemize}

\section{Generalised Differential Sparsity}
\label{GDSp}

As already mentioned in Section \ref{RelatedWork}, the central idea of sparsity is based on the number of zero coefficients of a vector. As a consequence, most of the already existing sparsity metrics (e.g., the $l_p$-norm \cite{karvanen2003measuring}), use the magnitude of the vector coefficients to encode sparsity. 
In contrast to the above, in this section we propose a novel Generalised Differential Sparsity (GDS) metric framework, which takes into account the differences among the coefficients of a vector, rather than the magnitudes per se, for measuring its sparsity.
In this way, GDS achieves to measure the sparsity of a vector by examining the extent to which the energy is distributed to the coefficients.

\begin{definition}
\label{GDS_def}
The GDS of order $p$ ($p \ge 1$) of a non-zero vector $\mathbf{c} \in \mathbb{R}^N$ is defined as:
\begin{equation}
\label{GDS_definition}
S_p(\mathbf{c}) = \frac{1}{N \sum_{i=1}^N c_i^p} \sum_{i=1}^{N-1} \sum_{j=i+1}^{N} \left( c_j - c_i \right)^p, 
\end{equation}
where the coefficients have been sorted in ascending order so that:
$c_1 \le c_2 \le \dots \le c_N$.
\end{definition}

Using Definition \ref{GDS_def}, we prove the following theorem, which provides lower and upper bounds on the possible values of GDS regardless of its order $p$.

\begin{theorem} 
\label{Range}
$0 \le S_p(\mathbf{c}) \le 1 - \frac{1}{N}, \, \, \, \forall \, \mathbf{c} \in \mathbb{R}^N, \, \, \, \forall \, p \ge 1$.
\end{theorem}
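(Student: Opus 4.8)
The plan is to treat the two inequalities separately, since both reduce to elementary observations about the sorted, nonnegative coefficients rather than anything delicate.

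For the lower bound I would simply observe that, because the coefficients have been sorted so that $c_1 \le c_2 \le \dots \le c_N$, every difference $c_j - c_i$ with $j > i$ is nonnegative, hence each term $(c_j - c_i)^p$ in the double sum is nonnegative; and since $\mathbf{c}$ is a nonzero vector with nonnegative entries, the denominator $N \sum_{i=1}^N c_i^p$ is strictly positive. Therefore $S_p(\mathbf{c}) \ge 0$. This is also where the restriction in Definition \ref{GDS_def} to nonzero vectors is used: it guarantees the ratio is well defined.

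For the upper bound the key estimate is the pointwise inequality $(c_j - c_i)^p \le c_j^p$, which holds whenever $0 \le c_i \le c_j$ and $p \ge 1$, because $t \mapsto t^p$ is nondecreasing on $[0,\infty)$ and $0 \le c_j - c_i \le c_j$. Substituting this into the numerator and bookkeeping the double summation gives
$$\sum_{i=1}^{N-1}\sum_{j=i+1}^{N} (c_j - c_i)^p \;\le\; \sum_{i=1}^{N-1}\sum_{j=i+1}^{N} c_j^p \;=\; \sum_{j=1}^{N} (j-1)\, c_j^p \;\le\; (N-1)\sum_{j=1}^{N} c_j^p,$$
where the middle equality merely counts that $c_j^p$ occurs once for each index $i < j$ (that is, $j-1$ times), and the final step uses $j - 1 \le N - 1$. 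Dividing through by $N\sum_{i=1}^N c_i^p$ yields $S_p(\mathbf{c}) \le \frac{N-1}{N} = 1 - \frac{1}{N}$.

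I do not expect any real obstacle: the only points requiring care are the justification of the pointwise bound $(c_j - c_i)^p \le c_j^p$ and the nonvanishing of the denominator, after which the argument is pure manipulation of the double sum. One could additionally remark that the upper bound is essentially sharp, being approached by vectors of the form $[0,\dots,0,1]$, which anticipates property $P_{11}$, but this is not needed for the statement as given.
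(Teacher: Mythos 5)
Your proof is correct, and it follows the same overall plan as the paper's: bound each summand of the numerator pointwise, count how many times each $c_j^p$ appears, and compare with $(N-1)\sum_i c_i^p$. The one genuine difference is the pointwise estimate. You use $(c_j-c_i)^p \le c_j^p$, justified solely by monotonicity of $t\mapsto t^p$ on $[0,\infty)$, whereas the paper invokes the stronger inequality $(c_j-c_i)^p \le c_j^p - c_i^p$ (which for $p\ge 1$ rests on superadditivity of $t\mapsto t^p$ for nonnegative arguments, a fact the paper asserts without justification) and then immediately discards the resulting $-c_i^p$ contributions, landing on exactly your intermediate quantity $\sum_{j}(j-1)\,c_j^p$. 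Your route is therefore a mild streamlining: it reaches the same bound while requiring only the trivial inequality, and nothing is lost since the extra strength of the paper's estimate is never used. Your handling of the lower bound and of the nonvanishing of the denominator (via the nonzero-vector hypothesis in Definition~\ref{GDS_def}) matches the paper's.
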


\begin{proof}
\begin{equation}
\label{GDS}
S_p(\mathbf{c}) = \alpha \sum_{i=1}^{N-1} \sum_{j=i+1}^N \left( c_j - c_i \right)^p,
\end{equation} 
where
\begin{equation}
\alpha = \frac{1}{N \sum_{i=1}^N c_i^p}.
\end{equation}
But $\left( c_j - c_i \right)^p \le c_j^p - c_i^p$, since $c_j \ge c_i \ge 0$ (recall that the coefficients are sorted).
Therefore, (\ref{GDS}) becomes:
\begin{multline*} 
S_p(\mathbf{c}) \le \alpha \sum_{i=1}^{N-1} \sum_{j=i+1}^N (c_j^p - c_i^p )
\\ 
= \alpha \left[ \sum_{i=1}^{N-1} \sum_{j=i+1}^N c_j^p - \sum_{i=1}^{N-1} \sum_{j=i+1}^N c_i^p \right]
\\
= \alpha \Big[ ( c_2^p + c_3^p + \dots + c_N^p ) + ( c_3^p + \dots + c_N^p ) + \dots + c_N^p 
\\
- (N-1) c_1^p - (N-2) c_2^p - \dots - 2c_{N-2}^p - c_{N-1}^p \Big] 
\\
\le \alpha \Big[ ( c_2^p + c_3^p + \dots + c_N^p ) + ( c_3^p + \dots + c_N^p ) + \dots + c_N^p \Big] 
\\
= \alpha \Big[ c_2^p + 2 c_3^p + \dots + (N-1) c_N^p \Big]  
\\
\le \alpha \Big[ (N-1) c_1^p + (N-1) c_2^p + \dots + (N-1) c_N^p \Big] 
\\
= \alpha (N-1) \sum_{i=1}^N c_i^p = \frac{N-1}{N} = 1 - \frac{1}{N}. 
\end{multline*}
Finally, it is obvious that $S_p(\mathbf{c}) \ge 0$, since in the sum always $j > i$ and therefore $c_j - c_i \ge 0$.
Hence, $0 \le S_p(\mathbf{c}) \le 1 - \frac{1}{N}$. 
\end{proof}

\noindent
Furthermore, it can be easily shown that $S_p \left( \left[ q, q, \dots, q \right] \right) = 0$ and 
$S_p \left( \left[ 0, 0, \dots, q \right] \right) = 1 - \frac{1}{N}$, which means that $0$ and $1 - \frac{1}{N}$ is the minimum and maximum sparsity value, respectively.

In the following, we provide rigorous proofs that the proposed GDS metric satisfies the eleven basic properties presented in Section \ref{Properties}. The \emph{Continuity} and \emph{Permutation Invariance} properties can be trivially proven from the definition, while the proof of \emph{Lower} and \emph{Upper Bound} properties is contained in Theorem \ref{Range}. 
The proof of $P_3$, due to its extensive length, has been appended to the end of this paper. Hereunder, we provide the proofs of properties $P_4$ to $P_9$.

\begin{theorem}
\label{ScaleTheorem}
GDS satisfies $P_4$: Scaling, i.e.:
\begin{equation*}
S_p(a \mathbf{c}) = S_p(\mathbf{c}), \text{\,\,} \forall a \in \mathbb{R}, \text{\,\,} a>0.
\end{equation*} 
\end{theorem}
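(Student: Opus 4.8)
The plan is to exploit the fact that both the numerator and the denominator in the defining formula (\ref{GDS_definition}) are positively homogeneous of the same degree $p$, so that any positive scalar cancels. First I would note that multiplying every coefficient by a fixed $a>0$ preserves the ascending ordering $c_1 \le c_2 \le \dots \le c_N$, so the sorted representation demanded by Definition \ref{GDS_def} remains valid for $a\mathbf{c}$ with no reindexing; consequently the index pairs $(i,j)$ appearing in the double sums for $a\mathbf{c}$ are exactly those appearing for $\mathbf{c}$.

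Next I would substitute $ac_i$ for $c_i$ everywhere. In the numerator, each term becomes $(ac_j - ac_i)^p = a^p (c_j - c_i)^p$, so the double sum acquires an overall factor $a^p$. In the denominator, $N \sum_{i=1}^N (ac_i)^p = a^p \, N \sum_{i=1}^N c_i^p$, which acquires the same factor $a^p$. Since $a>0$, the quantity $a^p$ is a strictly positive real, hence nonzero and legitimately cancellable, which immediately gives $S_p(a\mathbf{c}) = S_p(\mathbf{c})$.

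I expect no real obstacle here: the only points deserving a word of care are that $a>0$ is used both to keep the sorted order intact and to guarantee $a^p>0$ so that cancellation is valid, and that the hypothesis $\mathbf{c}\neq\mathbf{0}$ in Definition \ref{GDS_def} ensures $\sum_{i=1}^N c_i^p \neq 0$, so the ratio is well defined before and after scaling. Everything else reduces to a one-line computation.
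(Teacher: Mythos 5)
Your proof is correct and follows essentially the same route as the paper's: substitute $a\mathbf{c}$ into the definition, factor $a^p$ out of both the numerator and the denominator, and cancel. Your added remarks about the ordering being preserved and the denominator being nonzero are sound (and slightly more careful than the paper's one-line computation).
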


\begin{proof}
\begin{multline*}
S_p(a \mathbf{c}) = \frac{1}{N \left( \sum_{i=1}^N a c_i \right)^p} \sum_{i=1}^{N-1} \sum_{j=i+1}^N (a c_j -a c_i)^p
\\ 
= \frac{1}{a^p N \left( \sum_{i=1}^N c_i \right)^p} a^p \sum_{i=1}^{N-1} \sum_{j=i+1}^N (c_j - c_i)^p = S_p(\mathbf{c}).
\end{multline*}
\end{proof}

\begin{theorem}
\label{ScalingTheorem}
GDS satisfies $P_5$: Rising Tide, i.e.:
\begin{equation*}
S_p(\mathbf{c} + a) < S_p(\mathbf{c}), \,\, \forall \, a \in \mathbb{R}, \,\, a>0,
\end{equation*}
\begin{equation*}
\text{except for the case where} \text{\,\,} c_1 = c_2 = \dots = c_N.
\end{equation*}
\end{theorem}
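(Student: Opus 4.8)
The plan is to isolate the effect of the operation $\mathbf{c}\mapsto\mathbf{c}+a$ entirely in the denominator of $S_p$, exploiting the fact that the numerator is translation-invariant. First I would observe that $(c_j+a)-(c_i+a)=c_j-c_i$, so that adding the positive constant $a$ preserves the ascending order of the coefficients and the formula of Definition \ref{GDS_def} applies verbatim to $\mathbf{c}+a$, giving
\begin{equation*}
S_p(\mathbf{c}+a)=\frac{D}{N\sum_{i=1}^{N}(c_i+a)^p},\qquad D:=\sum_{i=1}^{N-1}\sum_{j=i+1}^{N}(c_j-c_i)^p,
\end{equation*}
where the quantity $D$ does not depend on $a$ and coincides with the numerator of $S_p(\mathbf{c})$.

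Second, I would show that the denominator is strictly larger for $a>0$ than for $a=0$. Since $c_i\ge 0$, we have $c_i+a>c_i\ge 0$ for every $i$, and because $t\mapsto t^p$ is strictly increasing on $[0,+\infty)$ for $p\ge 1$, this yields $(c_i+a)^p>c_i^p$ for all $i$, and hence $\sum_{i=1}^{N}(c_i+a)^p>\sum_{i=1}^{N}c_i^p$.

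Third, I would combine the two observations. If the coefficients are not all equal, there is a pair $i<j$ with $c_j>c_i$, so $(c_j-c_i)^p>0$ and therefore $D>0$; dividing the fixed positive number $D/N$ by a strictly larger positive denominator gives $S_p(\mathbf{c}+a)<S_p(\mathbf{c})$, which is the claim. In the complementary case $c_1=c_2=\dots=c_N$ one has $D=0$, so $S_p(\mathbf{c}+a)=0=S_p(\mathbf{c})$, precisely the exception allowed in the statement. There is no real obstacle in this argument; the only point needing any care is keeping track of the edge case $D=0$ (equivalently, all coefficients equal), since this is exactly what distinguishes strict decrease from the equality case, together with the elementary monotonicity of $t\mapsto t^p$ on the nonnegative reals that makes the denominator inequality strict.
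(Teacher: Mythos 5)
Your proposal is correct and follows essentially the same route as the paper: the numerator $\sum_{i<j}(c_j-c_i)^p$ is unchanged by the translation, the denominator $N\sum_i (c_i+a)^p$ strictly increases, and the all-equal case is handled separately as the $D=0$ exception. Your explicit remark that $D>0$ is needed for the inequality to be strict is a small point of care the paper leaves implicit, but the argument is the same.
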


\begin{proof}

First of all, if $c_1 = c_2 = \dots = c_N$, then clearly 
\begin{equation}
S_p(a + \mathbf{c}) = S_p(\mathbf{c}) = 0.
\end{equation}
Otherwise,
\begin{multline*}
 S_p(\mathbf{c} + a) = 
 \\
 \frac{1}{N \sum_{i=1}^N (c_i + a)^p} \sum_{i=1}^{N-1} \sum_{j=i+1}^N \left[ (c_j + a) - (c_i + a) \right]^p =
\\ 
\frac{1}{N \sum_{i=1}^N (c_i + a)^p} \sum_{i=1}^{N-1} \sum_{j=i+1}^N (c_j - c_i)^p <
\\
\frac{1}{N \sum_{i=1}^N c_i^p} \sum_{i=1}^{N-1} \sum_{j=i+1}^N (c_j - c_i)^p = S_p(\mathbf{c}),
\end{multline*}
where the inequality follows from
\begin{equation*}
\sum_{i=1}^N (c_i + a)^p > \sum_{i=1}^N c_i^p, 
\end{equation*}
therefore $S_p(\mathbf{c} + a) < S_p(\mathbf{c})$.
\end{proof}

\begin{theorem}
\label{CloningTheorem}
GDS satisfies $P_6$: Cloning, i.e.:
\begin{equation*}
S_p(\mathbf{c}) = S_p(\mathbf{c} \| \mathbf{c}) = \dots = S_p(\mathbf{c} \| \mathbf{c} \| \dots \| \mathbf{c}). 
\end{equation*}
\end{theorem}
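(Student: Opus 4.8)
The plan is to use the already-established permutation invariance ($P_2$) to replace the cloned vector by its sorted rearrangement, and then carry out a direct bookkeeping of the numerator and denominator in Definition \ref{GDS_def}. It suffices to prove $S_p(\mathbf{c}) = S_p(\underbrace{\mathbf{c}\|\cdots\|\mathbf{c}}_{k})$ for an arbitrary integer $k \ge 2$; the full chain of equalities in the statement then follows immediately.

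First I would observe that since $\mathbf{c}$ is sorted as $c_1 \le \dots \le c_N$, the sorted rearrangement of the $k$-fold concatenation, call it $\mathbf{c}_k$, is the length-$kN$ vector in which each coefficient $c_i$ appears in a consecutive block of $k$ equal entries. By $P_2$, $S_p$ of the concatenation equals $S_p(\mathbf{c}_k)$, so I may compute with $\mathbf{c}_k$ directly, which makes the index structure of the double sum transparent.

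Next I would evaluate the two ingredients of $S_p(\mathbf{c}_k)$ separately. The normalising quantity: the length is $kN$ and the energy is $\sum_{i=1}^N k\, c_i^p = k \sum_{i=1}^N c_i^p$, so the prefactor for $\mathbf{c}_k$ is $\dfrac{1}{kN \cdot k \sum_{i=1}^N c_i^p} = \dfrac{1}{k^2 N \sum_{i=1}^N c_i^p}$. For the double sum, I would partition the pairs of indices of $\mathbf{c}_k$ according to the blocks they lie in: a pair lying inside a single block contributes $(c_i - c_i)^p = 0$, while for $i < j$ each of the $k$ entries equal to $c_i$ precedes each of the $k$ entries equal to $c_j$, giving exactly $k^2$ pairs, each contributing $(c_j - c_i)^p$. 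Hence the double sum for $\mathbf{c}_k$ equals $k^2 \sum_{i=1}^{N-1}\sum_{j=i+1}^N (c_j - c_i)^p$. Combining, the factors $k^2$ cancel and $S_p(\mathbf{c}_k) = S_p(\mathbf{c})$.

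I do not anticipate a genuine obstacle here. The only points needing a little care are the combinatorial count of the cross-block pairs, the remark that pairs of equal coefficients contribute zero so no degeneracy causes trouble, and making explicit that establishing the single-$k$ identity is enough to obtain the whole chain.
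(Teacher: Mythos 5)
Your proposal is correct and follows essentially the same route as the paper: sort the $k$-fold concatenation into blocks of $k$ equal entries, observe that the denominator picks up a factor $k\cdot k = k^2$ while the double sum picks up $k^2$ from the cross-block pairs (within-block pairs contributing zero), and cancel. The only difference is that you make the combinatorial count and the appeal to $P_2$ explicit, which the paper leaves implicit.
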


\begin{proof}
\begin{multline*}
S_p \left( \overbrace{\mathbf{c} \| \mathbf{c} \| \dots \| \mathbf{c}}^{M-times} \right) =
\\ 
S_p \left( \Big[ \overbrace{c_1, \dots, c_1}^M, \overbrace{c_2, \dots, c_2}^M, \dots, \overbrace{c_N, \dots, c_N}^M \Big]  \right)
\\
= \frac{1}{MN \sum_{i=1}^N M c_i^p} \sum_{i=1}^{N-1} \sum_{j=i+1}^N M^2 (c_j - c_i)^p
\\ 
= \frac{1}{N \sum_{i=1}^N c_i^p} \sum_{i=1}^{N-1} \sum_{j=i+1}^N (c_j - c_i)^p = S_p(\mathbf{c}).
\end{multline*}
\end{proof}

\begin{theorem}
GDS satisfies $P_7$: Bill Gates and $P_8$: Babies, i.e.:
\begin{multline*}
\forall i \in \{ 1, 2, \dots, N \}, \text{\,\,} \exists \beta_i>0, \text{\,\,} \text{such that} \text{\,\,} \forall a>0: 
\\
S_p(\left[ c_1,\dots,c_i+\beta_i+a,\dots,c_n\right]) \!\!
> \!\! S_p(\left[c_1,\dots,c_i+\beta_i,\dots,c_n\right])
\end{multline*}
and
\begin{equation*}
S_p(\mathbf{c} \| 0)>S_p(\mathbf{c}).
\end{equation*}
\end{theorem}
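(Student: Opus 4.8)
The plan is to dispose of \emph{Babies} ($P_8$) first, since it reduces immediately to Theorem~\ref{Range}. Appending a zero to the (already sorted) vector $\mathbf{c}$ yields the length-$(N+1)$ vector $[0,c_1,\dots,c_N]$; its $p$-th power sum is unchanged, and its pairwise differences are exactly those of $\mathbf{c}$ together with the extra terms $(c_j-0)^p=c_j^p$. Hence
\[
S_p(\mathbf{c}\,\|\,0)=\frac{1}{(N+1)\sum_{i=1}^N c_i^p}\left[\sum_{i<j}(c_j-c_i)^p+\sum_j c_j^p\right]=\frac{N\,S_p(\mathbf{c})+1}{N+1},
\]
and $S_p(\mathbf{c}\,\|\,0)>S_p(\mathbf{c})$ is then equivalent to $N\,S_p(\mathbf{c})+1>(N+1)S_p(\mathbf{c})$, i.e.\ to $S_p(\mathbf{c})<1$, which holds since Theorem~\ref{Range} gives $S_p(\mathbf{c})\le 1-\frac1N<1$.

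For \emph{Bill Gates} ($P_7$), the idea is to recast the (weakened) claim as the eventual monotonicity of a single-variable function. Fix the index $i$ and write $d_1\le\cdots\le d_{N-1}$ for the sorted list of coefficients $\{c_k:k\ne i\}$, assuming these are not all zero (otherwise $\mathbf{c}$ is proportional to a standard basis vector and $S_p\equiv 1-\frac1N$, so $P_7$ asserts nothing). Pick $\beta_i>0$ so large that $x:=c_i+\beta_i$ exceeds both $c_N$ and a threshold $x_0$ fixed below; then for every $a\ge 0$ the entry $x+a$ is the largest one, so after re-sorting
\[
S_p\bigl([c_1,\dots,c_i+\beta_i+a,\dots,c_N]\bigr)=g(x+a),\qquad g(t):=\frac{1}{N}\cdot\frac{A+\sum_{k=1}^{N-1}(t-d_k)^p}{C+t^p},
\]
with constants $A:=\sum_{k<l}(d_l-d_k)^p\ge 0$ and $C:=\sum_k d_k^p>0$. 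It then suffices to show that $g$ is strictly increasing on $[x_0,\infty)$ for a suitable $x_0$.

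The crux is the sign of $g'$. By the quotient rule, $g'(t)>0$ is equivalent to $\bigl(\sum_k (t-d_k)^{p-1}\bigr)(C+t^p)>\bigl(A+\sum_k(t-d_k)^p\bigr)t^{p-1}$, and the identity $t^p(t-d_k)^{p-1}-t^{p-1}(t-d_k)^p=d_k\,t^{p-1}(t-d_k)^{p-1}$ reduces this to
\[
C\sum_k(t-d_k)^{p-1}+t^{p-1}\sum_k d_k(t-d_k)^{p-1}>A\,t^{p-1}.
\]
Both terms on the left are nonnegative, so one only needs them to eventually dominate $A\,t^{p-1}$. Using $0\le d_l-d_k\le d_l$ gives $A\le(N-2)C$, and for $t\ge 2d_{N-1}$ one has $t-d_k\ge t/2$, so monotonicity of $u\mapsto u^{p-1}$ yields $t^{p-1}\sum_k d_k(t-d_k)^{p-1}\ge 2^{1-p}\bigl(\sum_k d_k\bigr)t^{2p-2}$. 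When $p>1$ this lower bound eventually exceeds $(N-2)C\,t^{p-1}\ge A\,t^{p-1}$, because $2p-2>p-1$ and $\sum_k d_k>0$; when $p=1$ all terms are constants and the inequality collapses to $N\sum_k d_k>A$, which again follows from $A\le(N-2)C=(N-2)\sum_k d_k$. Hence $g'>0$ on some $[x_0,\infty)$, and since $x,x+a\in[x_0,\infty)$ we obtain $g(x+a)>g(x)$, as required. I expect the real work to lie in this last step — carrying out the quotient-rule reduction cleanly and making the growth comparison rigorous for arbitrary real $p\ge 1$, together with confirming the degenerate edge case — whereas $P_8$ and the reduction to $g$ are routine.
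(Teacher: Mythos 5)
Your proof is correct, but it takes a genuinely different route from the paper's. The paper disposes of $P_7$ and $P_8$ in two lines by invoking Theorems 2.1 and 2.2 of Hurley and Rickard, which state that Robin Hood together with Scaling implies Bill Gates, and that Robin Hood, Scaling and Cloning together imply Babies; the real work is therefore outsourced to the lengthy appendix proof of the Robin Hood property. You instead give a direct, self-contained argument that bypasses Robin Hood entirely. For Babies, your exact recursion $S_p(\mathbf{c}\,\|\,0)=\bigl(N S_p(\mathbf{c})+1\bigr)/(N+1)$ combined with $S_p(\mathbf{c})\le 1-\frac{1}{N}$ from Theorem~\ref{Range} is both shorter and more informative than the paper's derivation, since it quantifies exactly how much the sparsity increases. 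For Bill Gates, the reduction to eventual monotonicity of $g(t)$, the quotient-rule identity $t^p(t-d_k)^{p-1}-t^{p-1}(t-d_k)^p=d_k\,t^{p-1}(t-d_k)^{p-1}$, the bound $A\le(N-2)C$, and the growth comparison of $t^{2p-2}$ against $t^{p-1}$ (with the separate $p=1$ case) are all sound, and the conclusion matches the weakened $\exists\beta_i$ form actually stated in the theorem. What the paper's route buys is brevity and reuse of already-proved properties; what yours buys is independence from both the Robin Hood theorem and the external reference. One shared caveat: in the degenerate case where all coefficients other than $c_i$ vanish, the strict inequality in $P_7$ genuinely fails, since by Scaling every positive multiple of a standard basis vector has $S_p$ equal to $1-\frac{1}{N}$; you flag and exclude this case explicitly, whereas the paper's citation-based argument silently inherits the same exception, because the Robin Hood-to-Bill Gates implication has no strict redistribution to perform when there is no positive mass outside coordinate $i$.
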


\begin{proof}
The proof is straightforward by combining Theorems \ref{ScalingTheorem}, \ref{CloningTheorem} and \ref{RobinHoodTheorem} (cf. Appendix) with Theorems 2.1 and 2.2 presented in \cite{hurley2009comparing}, the former of which states that if Robin Hood and Scaling are satisfied, then Bill Gates is also satisfied and the latter of which states that if Robin Hood, Scaling and Cloning are satisfied, then Babies is also satisfied.
\end{proof}

\begin{theorem}
GDS satisfies $P_9$: Saturation, i.e.:
\begin{equation}
\lim_{N \rightarrow + \infty} \frac{S_p(\mathbf{0}_N || 1)}{S_p(\mathbf{0}_{N-1} || 1)} = 1.
\end{equation}
\end{theorem}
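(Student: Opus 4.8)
The plan is to evaluate both $S_p(\mathbf{0}_N \| 1)$ and $S_p(\mathbf{0}_{N-1} \| 1)$ in closed form and then pass to the limit of their ratio. The crucial observation is that each of these vectors is exactly of the extremal shape $[0,\dots,0,q]$ (with $q=1$) treated in the remark following Theorem~\ref{Range}, so no fresh evaluation of the double sum in Definition~\ref{GDS_def} is really needed.

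First I would pin down the length convention: $\mathbf{0}_N \| 1$ is an $(N+1)$-length vector consisting of $N$ zeros followed by a single unit entry, whereas $\mathbf{0}_{N-1} \| 1$ has length $N$. For a length-$L$ vector of this type, Definition~\ref{GDS_def} gives $S_p([0,\dots,0,1]) = 1 - \frac{1}{L}$: the only nonzero terms $(c_j-c_i)^p$ are the $L-1$ pairs in which $j$ is the last index (each such term equal to $1$), and the denominator is $L\cdot\sum c_i^p = L$. Hence
\[
S_p(\mathbf{0}_N \| 1) = 1 - \frac{1}{N+1} = \frac{N}{N+1}, \qquad S_p(\mathbf{0}_{N-1} \| 1) = 1 - \frac{1}{N} = \frac{N-1}{N}.
\]

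Then I would simply form the quotient
\[
\frac{S_p(\mathbf{0}_N \| 1)}{S_p(\mathbf{0}_{N-1} \| 1)} = \frac{N/(N+1)}{(N-1)/N} = \frac{N^2}{(N+1)(N-1)} = \frac{N^2}{N^2-1},
\]
and observe that it converges to $1$ as $N\to+\infty$. Note that the order $p$ disappears completely, which is to be expected since on $0/1$ vectors both $(c_j-c_i)^p$ and $c_i^p$ are insensitive to the value of $p$.

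There is no real obstacle in this argument; the only thing requiring a bit of care is bookkeeping the two vector lengths consistently (so that they genuinely differ by one in dimension), together with noting that $S_p(\mathbf{0}_{N-1}\|1) = \frac{N-1}{N} \neq 0$ for $N\ge 2$, so that the ratio is well defined for all sufficiently large $N$ and the limit statement makes sense.
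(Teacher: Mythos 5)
Your proposal is correct and follows essentially the same route as the paper: both evaluate $S_p(\mathbf{0}_N\|1)=\frac{N}{N+1}$ and $S_p(\mathbf{0}_{N-1}\|1)=\frac{N-1}{N}$ and pass to the limit of the quotient $\frac{N^2}{N^2-1}\to 1$. The only difference is that you spell out the "trivially shown" closed-form evaluation (the $L-1$ unit difference terms over a denominator of $L$), which is a welcome but inessential addition.
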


\begin{proof}
It can be trivially shown that
\begin{equation}
S_p(\mathbf{0}_N || 1) = \frac{N}{N+1}.
\end{equation}
Therefore
\begin{equation}
\frac{S_p(\mathbf{0}_N || 1)}{S_p(\mathbf{0}_{N-1} || 1)} = \frac{\frac{N}{N+1}}{\frac{N-1}{N}} \xrightarrow{N \rightarrow +\infty} 1.
\end{equation}
\end{proof}

Having proven that GDS fulfils all objective criteria, we now explore how the order $p$ affects the estimation of sparsity.
This will provide insights on what is the appropriate order of GDS under certain circumstances.
Towards this end, the following important theorem shows
that as $p$ increases, vectors are more difficultly qualified as sparse by GDS of $p$-th order. 

\begin{theorem}
\label{Increasing}
Given a vector $\mathbf{c} = [c_1, c_2, \dots, c_N]$, if $p>q$, then $S_p (\mathbf{c}) < S_q (\mathbf{c})$.
\end{theorem}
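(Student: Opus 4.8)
The plan is to recast $S_p(\mathbf{c}) < S_q(\mathbf{c})$ as a comparison of weighted power means and then settle that comparison by a triple-by-triple accounting. First I would use the Scaling property (Theorem~\ref{ScaleTheorem}) to normalise so that $c_N = 1$; then every coefficient and every pairwise difference $c_j - c_i$ lies in $[0,1]$, using $c_i \ge 0$ and the sorting $c_1 \le \dots \le c_N$. Writing $A_r = \sum_{i=1}^{N-1}\sum_{j=i+1}^{N}(c_j-c_i)^r$ and $B_r = \sum_{l=1}^{N} c_l^r$, the claim is equivalent to $A_p B_q < A_q B_p$, i.e.\ to $\frac{A_p}{A_q} < \frac{B_p}{B_q}$ (here $A_q, B_q > 0$ as soon as $\mathbf{c}$ is nonzero and nonconstant; the degenerate cases $\mathbf{c} = [q,\dots,q]$ and $\mathbf{c} = [0,\dots,0,q]$ give equality and must be treated separately).

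The second step uses the elementary identity $\frac{\sum_k x_k^p}{\sum_k x_k^q} = \sum_k w_k\, x_k^{\,p-q}$ with $w_k = x_k^q/\sum_k x_k^q$, which presents $\frac{A_p}{A_q}$ as a weighted average of the numbers $(c_j-c_i)^{\,p-q}$ (weights $\propto (c_j-c_i)^q$) and $\frac{B_p}{B_q}$ as a weighted average of the numbers $c_l^{\,p-q}$ (weights $\propto c_l^q$). The one structural link between the two families is that every difference is dominated by the largest coefficient, $c_j - c_i \le c_j \le c_N$. Concretely I would expand
\[
A_q B_p - A_p B_q \;=\; \sum_{i<j}\sum_{l=1}^{N} c_l^{\,q}(c_j-c_i)^{\,q}\Big(c_l^{\,p-q} - (c_j-c_i)^{\,p-q}\Big),
\]
observe that the only negative contributions come from the triples with $c_l < c_j - c_i$, and try to cancel each of those against the always-nonnegative contributions of the indices $l = j, j+1, \dots, N$. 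The natural sub-target emerging from this is the two-coordinate inequality $(c_j - c_i)^{\,p-q}\,(c_i^{\,q} + c_j^{\,q}) \le c_i^{\,p} + c_j^{\,p}$, which one would then aim to sum over the $\binom{N}{2}$ pairs.

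The step I expect to be the main obstacle is precisely this recombination. The naive shortcut $(c_j-c_i)^{\,p-q} \le c_N^{\,p-q}$ --- which gives $A_p \le c_N^{\,p-q} A_q$ --- is of no use, because $c_N^{\,p-q} B_q \ge B_p$ (every coefficient satisfies $c_l \le c_N$), so it points the inequality the wrong way; the difference-side weighted average therefore has to be controlled by something finer than the bound on a single difference. Making all the cross terms of $\sum_{i<j}\sum_l$ recombine with the correct sign --- or, equivalently, proving the two-coordinate inequality above and then verifying it dominates every negative term --- is delicate, and I would expect to need an induction on $N$ (peeling off the coordinate $c_N$, or merging the two largest coordinates into one by a Robin--Hood-type step in the spirit of Theorem~\ref{RobinHoodTheorem}), combined with a case split according to whether $c_j \le 2c_i$ or $c_j > 2c_i$. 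That is where essentially all of the work lies.
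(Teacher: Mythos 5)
You are explicit that the recombination step --- establishing the two-coordinate inequality $(c_j-c_i)^{p-q}(c_i^q+c_j^q)\le c_i^p+c_j^p$ and then showing that the nonnegative triples in $A_qB_p-A_pB_q$ dominate the negative ones --- is ``where essentially all of the work lies,'' and that work is not carried out, so as it stands this is a plan rather than a proof. Worse, the gap is not closable, and your own decomposition shows exactly why: the statement is false. Take $N=3$, $\mathbf{c}=[\epsilon,\epsilon,1]$ with $\epsilon=0.01$, $q=1$, $p=2$. Then $S_1(\mathbf{c})=\frac{2(1-\epsilon)}{3(1+2\epsilon)}=\frac{1.98}{3.06}\approx 0.6471$ while $S_2(\mathbf{c})=\frac{2(1-\epsilon)^2}{3(1+2\epsilon^2)}=\frac{1.9602}{3.0006}\approx 0.6533>S_1(\mathbf{c})$. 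In your triple sum, each of the two pairs $(i,j)$ with $c_j-c_i=1-\epsilon$ meets two indices $l$ with $c_l=\epsilon<c_j-c_i$, producing two negative triples of magnitude $\approx\epsilon(1-\epsilon)$ against a single nonnegative triple ($l=3$) of the same magnitude; the net contribution is negative and no pairing of negative triples with the indices $l\ge j$ can rescue it. Your two-coordinate sub-target is indeed true for integer $p-q$ (iterate the one-step case $(1-t)(1+t^q)\le 1+t^{q+1}$), and for $N=2$ it is equivalent to the theorem; but for non-integer orders even it fails ($c_i=1$, $c_j=10$, $q=1$, $p=1.1$ gives $9^{0.1}\cdot 11\approx 13.70>1+10^{1.1}\approx 13.59$, hence $S_{1.1}([1,10])>S_1([1,10])$), and for $N\ge 3$ the cross terms $l\notin\{i,j\}$ defeat monotonicity outright even for integer orders.

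For what it is worth, the paper's own proof breaks at precisely the spot you flagged. After scaling so that $c_i>1$, it notes that $\sum_{i<j}(c_j-c_i)^p/c_N^p$ decreases in $p$ while $\sum_{i<N}c_i^p$ increases in $p$, and asserts these facts ``straightforwardly'' combine. They do not: from $a/b<c/d$ and $e>f$ one cannot conclude $a/(b+e)<c/(d+f)$ without additionally knowing $bf\le de$, which here reads $c_N^{p-q}\sum_{i<N}c_i^q\le\sum_{i<N}c_i^p$ --- exactly the wrong-way bound you identified, since $c_i\le c_N$ forces the reverse. So your diagnosis of the obstacle is correct and is in fact fatal to the published argument as well; only a restricted or weakened form of the theorem (e.g.\ the $N=2$, integer-order case, or the limit statement $\lim_{p\to\infty}S_p(\mathbf{c})=0$) can be salvaged, and your triple-by-triple framework is a sensible vehicle for proving such a restricted version once one is correctly formulated.
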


\begin{proof}

For proving this theorem we can assume that $c_i \! \! > \! \! 1, \forall i \! \! \in \! \! \{ 1,2,\dots,N \}$. Otherwise, since from Theorem \ref{ScaleTheorem} we have that $S_p (\mathbf{c}) = S_p (\alpha \mathbf{c}), \alpha \! > \! 0$, we can multiply each coefficient $c_i$ by a value $\alpha \! > \! \frac{1}{c_1}$ to obtain 
$\alpha c_i \! > \! 1, \forall i \! \! \in \! \! \{ 1,2,\dots,N \}$. 

We have that 
\begin{multline}
\label{pq_ineq}
\frac{ \sum_{i=1}^{N-1} \sum_{j=i+1}^N (c_j - c_i)^p }{c_N^p} = \sum_{i=1}^{N-1} \sum_{j=i+1}^N \frac{(c_j - c_i)^p}{c_N^p}
\\
= \sum_{i=1}^{N-1} \sum_{j=i+1}^N \left( \frac{c_j-c_i}{c_N} \right)^p < \sum_{i=1}^{N-1} \sum_{j=i+1}^N \left( \frac{c_j-c_i}{c_N} \right)^q
\\
= \sum_{i=1}^{N-1} \sum_{j=i+1}^N \frac{(c_j - c_i)^q}{c_N^q} = \frac{ \sum_{i=1}^{N-1} \sum_{j=i+1}^N (c_j - c_i)^q }{c_N^q},
\end{multline}
where the inequality holds because $\frac{c_j-c_i}{c_N}<1$ and $p>q$.
Moreover, since we assumed that $c_i > 1, \forall i \in \{ 1,2,\dots,N \}$, we also have that 
\begin{equation}
c_{N-1}^p + c_{N-2}^p + \dots + c_1^p > c_{N-1}^q + c_{N-2}^q + \dots + c_1^q,
\end{equation}
which combined with inequality (\ref{pq_ineq}) straightforwardly implies that 
\begin{equation}
\frac{1}{N} \frac{ \sum_{i=1}^{N-1} \sum_{j=i+1}^N (c_j - c_i)^p }{c_N^p + (c_{N-1}^p + \dots + c_1^p)} < \frac{1}{N} \frac{ \sum_{i=1}^{N-1} \sum_{j=i+1}^N (c_j - c_i)^q }{c_N^q + (c_{N-1}^q + \dots + c_1^q)},
\end{equation}
which is equivalent to $S_p(\mathbf{c}) < S_q(\mathbf{c})$,
completing the proof.
\end{proof}

In the same direction, we also have the next result, which examines the limit behaviour of $S_p$, when $p$ tends to infinity.

\begin{theorem}
\label{Limit}
For an arbitrary vector $\mathbf{c}$, we have that $\lim_{p \to +\infty} S_p(\mathbf{c}) = 0$.
\end{theorem}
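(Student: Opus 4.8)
The plan is to trap $S_p(\mathbf{c})$ between $0$ and an explicit quantity that decays geometrically in $p$, and then squeeze. By Theorem~\ref{Range} we already have $S_p(\mathbf{c})\ge 0$, and by Theorem~\ref{Increasing} the sequence $\{S_p(\mathbf{c})\}_{p\ge 1}$ is strictly decreasing, so the limit exists a priori and is nonnegative; the remaining work is to produce an upper bound tending to $0$. First I would dispose of the degenerate case: if $c_1=c_2=\dots=c_N$, then $S_p(\mathbf{c})=0$ for every $p$ and the claim is immediate. Otherwise $c_N>c_1\ge 0$, and by the Scaling property (Theorem~\ref{ScaleTheorem}) I may rescale the vector so that $c_N=1$.

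Then I would estimate numerator and denominator in Definition~\ref{GDS_def} separately. Every difference in the double sum is dominated by the extreme one, $0\le c_j-c_i\le c_N-c_1$, so $\sum_{i=1}^{N-1}\sum_{j=i+1}^{N}(c_j-c_i)^p\le\binom{N}{2}(c_N-c_1)^p$; for the denominator, keeping only the largest summand gives $N\sum_{i=1}^{N}c_i^p\ge N c_N^p=N$. Combining, $0\le S_p(\mathbf{c})\le\tfrac{N-1}{2}(c_N-c_1)^p=\tfrac{N-1}{2}(1-c_1)^p$, with $N$ fixed. Since $0\le 1-c_1<1$, the right-hand side tends to $0$ as $p\to+\infty$, and the squeeze forces $\lim_{p\to+\infty}S_p(\mathbf{c})=0$.

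The one step that carries the real content is the final inequality: the geometric ratio $1-c_1$ is genuinely below $1$ only because the smallest (rescaled) coefficient is \emph{strictly positive}, which is precisely the relaxed-sparsity setting the paper works in (signals whose coefficients are never exactly zero, so that no difference $c_j-c_i$ equals the maximum $c_N$). This is where I expect the only obstacle to lie; everything else is routine bookkeeping on the bound above. An essentially equivalent packaging, closer in spirit to the proof of Theorem~\ref{Increasing}, is to invoke Theorem~\ref{Increasing} for the existence of the limit, then rescale so that $c_i>1$ for all $i$ as done there, and bound $S_p(\mathbf{c})\le\tfrac{N-1}{2}\bigl(1-\tfrac{1}{c_N}\bigr)^p\to 0$.
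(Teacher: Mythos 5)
Your argument follows essentially the same route as the paper's: both proofs squeeze $S_p(\mathbf{c})$ between $0$ and an upper bound obtained by keeping only the largest term $c_N^p$ in the denominator and dominating the numerator by powers of differences divided by $c_N$; the paper keeps the sum $\frac{1}{N}\sum_{i<j}\bigl(\frac{c_j-c_i}{c_N}\bigr)^p$ and sends each term to $0$ individually, while you collapse it to the single quantity $\frac{N-1}{2}\bigl(\frac{c_N-c_1}{c_N}\bigr)^p$ --- the same estimate in a tighter package. The substantive point is the caveat you isolate at the end, and you are right to flag it rather than bury it: the geometric ratio is strictly below $1$ only when $c_1>0$. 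The paper's proof asserts $c_j-c_i<c_N$ for \emph{all} $i,j$, which is false when $c_1=0$ (take $i=1$, $j=N$), and in that regime the theorem itself fails as stated for an ``arbitrary vector'': the paper's own computation gives $S_p(\mathbf{0}_{N-1}\|1)=1-\frac{1}{N}$ for every $p$, so the limit there is $1-\frac{1}{N}$, not $0$. Your proof is therefore correct exactly where the paper's is, namely for vectors with all coefficients strictly positive, and your explicit identification of that hypothesis is an improvement on the paper's silent assumption, not a defect. (The preliminary appeals to Theorems~\ref{Range} and~\ref{Increasing} for existence of the limit are harmless but unnecessary, since the explicit upper bound already forces convergence to $0$.)
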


\begin{proof}
For every $p$ we have that 
\begin{multline}
\label{squeeze}
0 \le S_p(\mathbf{c}) = \frac{1}{N} \frac{\sum_{i=1}^{N-1} \sum_{j=i+1}^N (c_j - c_i)^p}{\sum_{i=1}^N c_i^p} 
\\
\le \frac{1}{N} \frac{\sum_{i=1}^{N-1} \sum_{j=i+1}^N (c_j - c_i)^p}{c_N^p} = \frac{1}{N} \sum_{i=1}^{N-1} \sum_{j=i+1}^N \frac{(c_j-c_i)^p}{c_N^p} 
\\
= \frac{1}{N} \sum_{i=1}^{N-1} \sum_{j=i+1}^N \left( \frac{c_j-c_i}{c_N} \right)^p.
\end{multline}
But, since $c_j-c_i < c_N, \forall i,j$, we have that 
\begin{equation}
\lim_{p \to +\infty} \left( \frac{c_j-c_i}{c_N} \right)^p = 0, \forall i,j,
\end{equation} 
which implies that 
\begin{equation}
\lim_{p \to +\infty} \sum_{i=1}^{N-1} \sum_{j=i+1}^N \left( \frac{c_j-c_i}{c_N} \right)^p = 0.
\end{equation} 
Hence, applying the squeeze theorem in inequality (\ref{squeeze}), $\lim_{p \to +\infty} S_p(\mathbf{c}) = 0$.
\end{proof}

\noindent 
In a few words, Theorem \ref{Limit} states that when the order of GDS tends to infinity, all vectors regardless of the values of their coefficients are considered totally non-sparse.

The findings of Theorems \ref{Increasing} and \ref{Limit} actually show that the order $p$ determines the strictness of GDS in qualifying an arbitrary vector as sparse.
More specifically, higher order means more strict GDS.  
This feature offers the appropriate granularity to GDS and allows it to adjust to certain circumstances stemming from the nature of the data and the specific problem to be solved. 
For instance, 
it is anticipated that for data containing few zeros, which are supposed to be inherently non-sparse, a large order might be needed to discriminate among different levels of sparsity. On the contrary, for data with plenty of zeros, smaller orders, i.e., less strict metrics might 
prove to be optimal.
Therefore, 
as we will see in Section \ref{Experiments}, finding and adopting the appropriate GDS metric to a sparsity maximisation problem 
might lead to improved reconstruction results.

\section{Connection between GDS and GI}
\label{Gini}

In this section, we prove that 
GDS of first order collapses to GI \cite{hurley2009comparing,gini1921measurement}. But first of all, let us provide the definition of GI.

\begin{definition}
The Gini Index (GI) of a vector $\mathbf{c} \in \mathbb{R}^N$ is defined as:

\begin{equation*}
GI(\mathbf{c}) = 1 - 2 \sum_{i=1}^N \frac{c_i}{\| \mathbf{c} \|_1} \left( \frac{N - i + \frac{1}{2}}{N} \right) 
\end{equation*}
\end{definition}

\begin{theorem}
GDS of order $1$ is equivalent to GI.
\end{theorem}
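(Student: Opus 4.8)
The plan is to rewrite the double summation in the numerator of $S_1(\mathbf{c})$ as a single weighted sum over the (already sorted) coefficients, and then to massage the definition of $GI$ into exactly that same form. Since both $S_1$ and $GI$ are defined on the coefficients sorted in ascending order $c_1 \le c_2 \le \dots \le c_N$, no re-ordering issues arise, and the proof reduces to a short algebraic identity.

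First I would establish the combinatorial identity
\begin{equation*}
\sum_{i=1}^{N-1}\sum_{j=i+1}^N (c_j - c_i) = \sum_{k=1}^N (2k - N - 1)\, c_k .
\end{equation*}
This follows by counting, for each fixed index $k$, how often $c_k$ occurs with a plus sign (as the term $c_j$ with $j=k$ and $i<k$, i.e.\ $k-1$ times) versus with a minus sign (as the term $c_i$ with $i=k$ and $j>k$, i.e.\ $N-k$ times); the net coefficient of $c_k$ is therefore $(k-1)-(N-k) = 2k-N-1$. Dividing by $N\sum_{i=1}^N c_i = N\,\|\mathbf{c}\|_1$ gives
\begin{equation*}
S_1(\mathbf{c}) = \sum_{k=1}^N \frac{c_k}{\|\mathbf{c}\|_1}\cdot\frac{2k-N-1}{N}.
\end{equation*}

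Next I would transform $GI$. Noting that $2\cdot\frac{N-k+\frac12}{N} = \frac{2N-2k+1}{N}$ and that $\sum_{k=1}^N \frac{c_k}{\|\mathbf{c}\|_1} = 1$, I substitute this $1$ for the leading constant in the definition and combine the two sums:
\begin{equation*}
GI(\mathbf{c}) = \sum_{k=1}^N \frac{c_k}{\|\mathbf{c}\|_1}\!\left(1 - \frac{2N-2k+1}{N}\right) = \sum_{k=1}^N \frac{c_k}{\|\mathbf{c}\|_1}\cdot\frac{2k-N-1}{N},
\end{equation*}
which coincides with the expression obtained for $S_1(\mathbf{c})$, so $S_1 \equiv GI$.

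I do not anticipate any real obstacle here; the only step needing a moment's care is the bookkeeping in the combinatorial identity — getting the multiplicities $k-1$ and $N-k$ right, which depends on the coefficients being sorted so that the index pairs $(i,j)$ with $i<j$ range exactly over the unordered pairs. As an alternative one could verify that identity by induction on $N$, or via the symmetrisation $\sum_{i<j}(c_j-c_i) = \tfrac12\sum_{i,j}|c_i-c_j|$ together with the standard single-sum form of the mean absolute difference, but the direct counting argument is the cleanest route.
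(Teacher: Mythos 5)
Your proof is correct and follows essentially the same route as the paper: both reduce the double sum $\sum_{i<j}(c_j-c_i)$ to the single weighted sum $\sum_k (2k-N-1)c_k$ by counting how many times each sorted coefficient appears with a plus versus a minus sign, and then match this against the definition of $GI$. The only cosmetic difference is that you rewrite $GI$ into the $S_1$ form while the paper rewrites $S_1$ into the $GI$ form.
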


\begin{proof}
\begin{multline*}
S_1 (\mathbf{c})
= \frac{1}{N \| \mathbf{c} \|_1} \sum_{i=1}^{N-1} \sum_{j=i+1}^N (c_j - c_i) 
\\
= \frac{1}{N \| \mathbf{c} \|_1} \left( \sum_{i=1}^{N-1} \sum_{j=i+1}^N c_j - \sum_{i=1}^{N-1} \sum_{j=i+1}^N c_i \right) 
\\
= \frac{1}{N \| \mathbf{c} \|_1} \left( \sum_{i=1}^N (i-1) c_i - \sum_{i=1}^N (N-i) c_i \right) 
\\
= \frac{1}{N \| \mathbf{c} \|_1} \left( - N \sum_{i=1}^N c_i - \sum_{i=1}^N c_i + 2 \sum_{i=1}^N i c_i \right) 
\\
= \frac{1}{\| \mathbf{c} \|_1} \left( \sum_{i=1}^N c_i - 2 \sum_{i=1}^N c_i + \frac{2}{N} \sum_{i=1}^N i c_i - \frac{1}{N} \sum_{i=1}^N c_i \right) 
\\
= 1 - \frac{2}{\| \mathbf{c} \|_1} \left( \sum_{i=1}^N c_i - \frac{1}{N} \sum_{i=1}^N i c_i + \frac{1}{2N} \sum_{i=1}^N c_i \right) 
\\
1 - \frac{2}{\| \mathbf{c} \|_1} \sum_{i=1}^N c_i \left( 1 - \frac{i}{N} + \frac{1}{2N} \right) = GI(\mathbf{c}).
\end{multline*}
\end{proof}

The encapsulation of GI within GDS in conjunction with the fact that the former has been proven to be a state-of-the-art metric of sparsity in the literature, demonstrate the power of GDS as a generalised framework for unifying already existing metrics as well as its potential as a framework to develop novel state-of-the-art metrics of sparsity.

\section{A computationally more efficient formula of GDS for high dimensional data}
\label{Efficient}

Although the formula used for the definition of the GDS metric is simple and easy to comprehend, in certain cases, i.e., when the number of vector dimensions is large, it is difficult to compute. A more tractable and computationally efficient formula for GDS of order $p$, with $p$ integer, is presented in this section. Moreover a computational analysis is also provided in order to compare the two formulas. Due to computational reasons, the alternative formula is different for even and odd values of $p$. For this purpose, the two cases are separately presented. The rigorous derivation of these formulas from the original one are provided in the Appendix.

\subsection{Even formula}

\begin{multline}
\label{EvenFormula}
S_{2k} (\mathbf{c}) = 1 + \frac{1}{N\| \mathbf{c} \|_{2k}^{2k}} \Bigg[ \sum_{\omega=1}^{k-1} (-1)^{\omega} 
\binom{2k}{\omega} \| \mathbf{c} \|_{\omega}^{\omega} \| \mathbf{c} \|_{2k - \omega}^{2k - \omega}
\\ 
+ \frac{(-1)^k}{2} \binom{2k}{k} \| \mathbf{c} \|_k^{2k} \Bigg].
\end{multline}

\subsection{Odd formula}

\begin{equation}
\label{OddFormula}
S_{2k+1}(\mathbf{c}) = \frac{1}{N\| \mathbf{c} \|_{2k+1}^{2k+1}} \gamma \,,
\end{equation}
where 
\begin{equation}
\gamma =  \sum_{\omega = 0}^k (-1)^{\omega} \binom{2k+1}{\omega} \sum_{i=1}^N \left( c_i^{2k+1- \omega} f_{\omega}(i) - c_i^{\omega} f_{2k+1- \omega}(i) \right)
\end{equation}
and
\begin{equation}
f_{\omega}(i) = \sum_{j=1}^i c_j^{\omega}.
\end{equation}

\subsection{Computational analysis}

It can be easily proven that the original formula for calculating the sparsity of an $N$-length vector using $S_p$ (see eq. \ref{GDS_def}) requires $(p-1)N\frac{N+1}{2}$ multiplications and $N^2 - 2$ additions, which is in total on the order of $O(N^2 p)$. 
This computational load is for many practical reasons inefficient when $N$ is large. 
The corresponding load when using the ``even'' formula for $S_{2k}$ (see eq. \ref{EvenFormula}), where $p = 2k$, consists of 
$k^2(2N+1) - kN - k$
multiplications and 
$2kN - k + 2$
additions, 
and when using the ``odd'' formula for $S_{2k+1}$ (see eq. \ref{OddFormula}), where now $p = 2k + 1$, of $4k^2N + 6kN + k^2 - k + N - 2$ multiplications and $(2k+5)N - k - 4$ additions.
For both even and odd formulas, the computational complexity is in total $O(k^2 N)$, which is clearly more efficient when approximately $N > p/4$.
However, in the opposite case, the original formula is more tractable.
Indicatively, for a $10000$-dimensional vector, 
the even formula needs around $5000$ times less calculations than the original one for $p=2$ and $100$ times less calculations for $p=100$.
The corresponding numbers for the odd formula are $1700$ and $50$ for $p=1$ and $p=99$, respectively.
Summarising the above, for practical reasons, the new formula proves to be very useful in both even and odd cases.
However, notice how the situation is reversed when $N$ is very small in relation to $p$. For example, for $N=10$ and $p=100$,
the even and odd formulas need respectively around $10$ and $20$ times more calculations than the original formula.
Therefore, both the original and the alternative formulas prove to be important and may be preferred according to the specific conditions.

\section{Differential Sparsity of Normalised Data}
\label{NormalGDS}

In the above analysis, we implicitly assumed that the vector coefficients are commensurate in the sense that they are all measured in the same scale or in some way, they have been normalised, so that they are comparable to each other. Actually, this assumption is indispensable 
for the proposed differential metric to make sense.
However, in practice, this assumption often does not hold. For handling such cases, we propose the normalisation of the coefficients prior to the application of the sparsity metric. The normalisation is accomplished by centralising the data so that they have zero mean and unit standard deviation. A key difference of this approach from the main approach presented in the previous sections is that here we need a dataset of vectors or ideally the underlying distributions, in order to model the mean value and the standard deviation of each of the vector coefficients. So, let 

\begin{equation*}
X = \left( 
\begin{array}{c c c}
x_{1,1} & \hdots & x_{1,n} \\
\vdots & \ddots & \vdots \\
x_{N,1} & \hdots & x_{N,n}
\end{array}
\right)
\end{equation*}
be a dataset consisting of $n$ $N$-dimensional column-vectors.
Let also
\begin{equation*}
\mu_i = \frac{1}{n} \sum_{k=1}^n x_{i,k}, \, i \in \{ 1, 2, \dots, N \}
\end{equation*}
be the mean value and 
\begin{equation*}
\sigma_i^2 = \frac{1}{n-1} \sum_{k=1}^n (x_{i,k} - \mu_i)^2, \, i \in \{ 1, 2, \dots, N \}
\end{equation*}
the standard deviation of the $i$-th coefficient along the dataset.
Then, 

\begin{equation*}
\widehat{x}_{i,j} = \frac{x_{i,j} - \mu_i}{\sigma_i}
\end{equation*}
are the centralised data.
We denote the centralised sparsity as:

\begin{equation*}
\widehat{S}_p ( x_{(:,j)} ) = S( \widehat{x}_{(:,j)}),
\end{equation*}
where $\widehat{x}_{(:,j)} = \left[ \widehat{x}_{1,j}, \widehat{x}_{2,j}, \dots, \widehat{x}_{N,j} \right]^T$.

The above data normalisation scheme permits the application of GDS to any type of data, regardless of the underlying distributions of the representation coefficients.

\section{Experiments}
\label{Experiments}

\begin{figure*}[t!]
\centering
\includegraphics[scale=0.4]{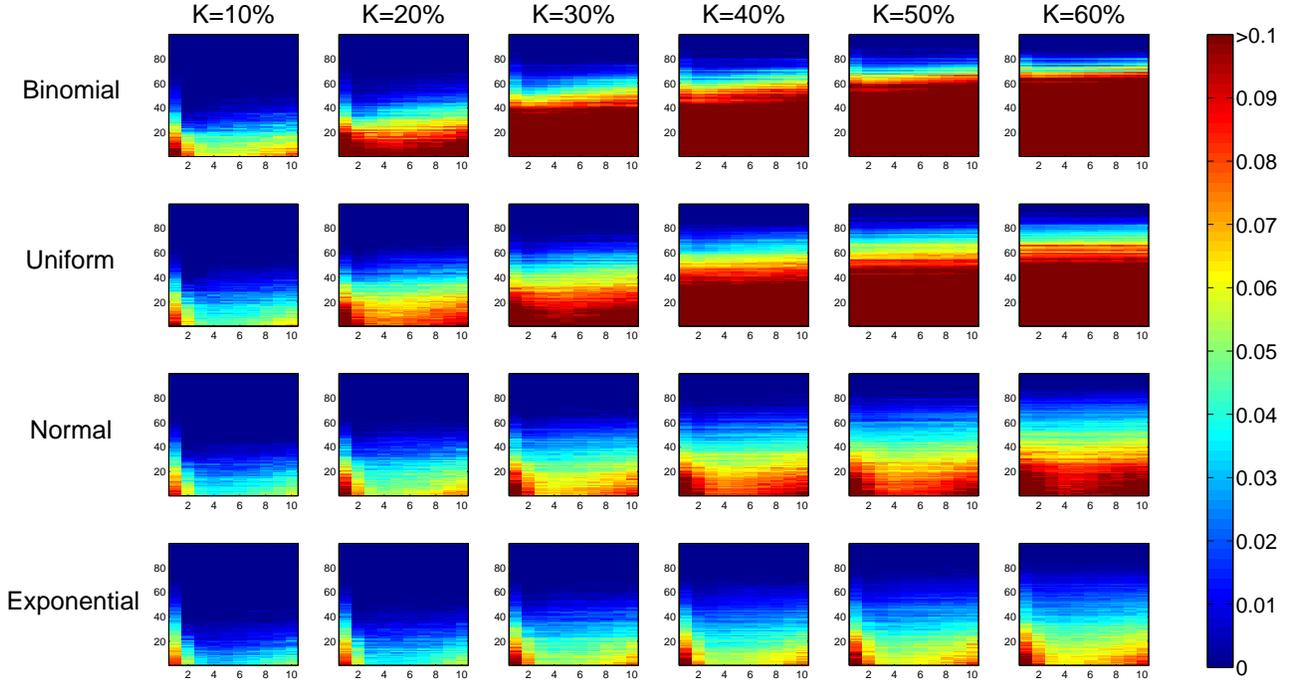}
\caption{Reconstruction error using various values for the order $p$ of GDS and the number $M$ of the reduced dimensions, for different values of the number $K$ of non-zero coefficients of the original data. Horizontal axis: $p$, vertical axis: M, Colourbar: MSE.}
\label{MSEvsMeasurements_maxp_10}  
\end{figure*}

In this Section, we investigate the reconstruction error of randomly projected sparse vectors as a function of the sparsity and the type of the original data, the order of GDS and the number of reduced dimensions.
More specifically, given an original vector $\mathbf{x}_0 \in \mathbb{R}^N$ and an $M \times N$ projection matrix $\mathbf{A}$ with $M<N$, $\mathbf{x}_0$ is projected to $\mathbf{y} = \mathbf{A} \mathbf{x}_0 \in \mathbb{R}^M$, and subsequently reconstructed to the initial space. 
In our work, the entries of $\mathbf{A}$ are generated using 
i.i.d random variables of a zero mean and unit standard deviation Gaussian distribution.
The reconstruction of $\mathbf{x}_0$ from $\mathbf{y}$ is accomplished by solving the following constrained optimisation problem:
\begin{equation}
\label{OptProblem}
\argmax_{\mathbf{x} \in \mathbb{R}^N} S_p(\mathbf{x}) \text{\,\,\, subject to \,\,\,} \mathbf{A} \mathbf{x} = \mathbf{y},
\end{equation}
where $\mathbf{x}$ is the estimate of the original vector and $S_p(\mathbf{x})$ is the sparsity of $\mathbf{x}$.
Essentially, having the prior information that the original vector before compression was sparse, the aim of (\ref{OptProblem}) is to find the sparsest solution in the pool of feasible solutions satisfying the above constraint. For solving this problem, we have employed the iterative Simultaneous Perturbation Stochastic Approximation (SPSA) algorithm \cite{spall1999stochastic}. Actually, 
we adopted the implementation presented in \cite{zonoobi2011gini}, which combined with GI has shown impressive performance in signal reconstruction. 
In our case, the parameters involved in SPSA have been selected based on some previous research results \cite{zonoobi2011gini,sadegh1998optimal}. 

There are two main reasons why we opted to use SPSA. First, it does not make direct reference to the gradient of the objective function. Instead, it approximates the gradient using only two calculations of the objective function per iteration, regardless of the signal dimensionality. This feature renders SPSA computationally very efficient in high-dimensional problems.
Second, it has been proven that SPSA under general conditions converges to the global optima \cite{maryak2001global}.

In our experiment, 
we generated a random vector $\mathbf{x}_0$ of size $N=100$ with $K$ non-zero coefficients. 
We varied $K$ in the range between $10\%$ and $60\%$ and the non-zero coefficients were generated using four different distributions: Binomial, Uniform, Normal and Exponential.
It is worth noting that the smaller the $K$ is, the sparser the vector is, that is $K$ and sparsity are inverse quantities.
For several values of the order $p$ in the range between 1 and 10, we exhaustively varied the number $M$ of the projected dimensions of $\mathbf{y}$ from 1 to 99 and we reconstructed $\mathbf{x}_0$ by employing SPSA/GDS using eq. (\ref{OptProblem}).
Finally, for each setting, we calculated the Mean Square Error (MSE) between the recovered and the original vector.
For ensuring statistical significance, we repeated the whole above approach 100 times and we calculated the average MSE for each triple of values $K$, $M$ and $p$. 

The reconstruction errors that we obtained using the above settings are pooled in Fig. \ref{MSEvsMeasurements_maxp_10}. 
The four rows correspond to the binomial, normal, uniform and exponential data, respectively. 
The subfigures of each row correspond to different values of $K$ (i.e., sparsity level). 
For each subfigure, the horizontal axis depicts the order $p$ of GDS and the vertical axis the number $M$ of projected dimensions. The MSE that corresponds to every pair $(p,M)$ is indicated by the colourbar on the right. 

From Fig. \ref{MSEvsMeasurements_maxp_10}, performing a row-wise (i.e., sparsity oriented) comparison, it is interesting to observe that regardless of the data type, the greater $K$ is, i.e., the less the sparsity of the original vector is, the larger the MSE becomes in general. This behaviour clearly verifies the importance of sparsity in signal reconstruction.
Similarly,
performing a column-wise (i.e., data type oriented) comparison, it is clear that regardless of $K$, the reconstruction error decreases as we move from top (binomial data) to bottom (exponential data).
This can be attributed to the fact that for a specific $K$, although in all four cases we use equal number of non-zeros, in fact GDS tends to consider more sparse those vectors whose sorted absolute coefficients have larger differences. 
Fig. \ref{DataTypes} illustrates the general form of an arbitrary vector generated using either of the above four distributions in the case where $K=40\%$. 
Indicatively, the first order GDS sparsities of these prototypic vectors are approximately $0.60$, $0.73$, $0.76$ and $0.79$, respectively.
Apparently from the above, in terms of GDS, exponential distribution gives the sparsest vectors and thus the most eligible for reconstruction using the adopted methodology.

\begin{figure*}[t!]
\centering
\includegraphics[scale=0.3]{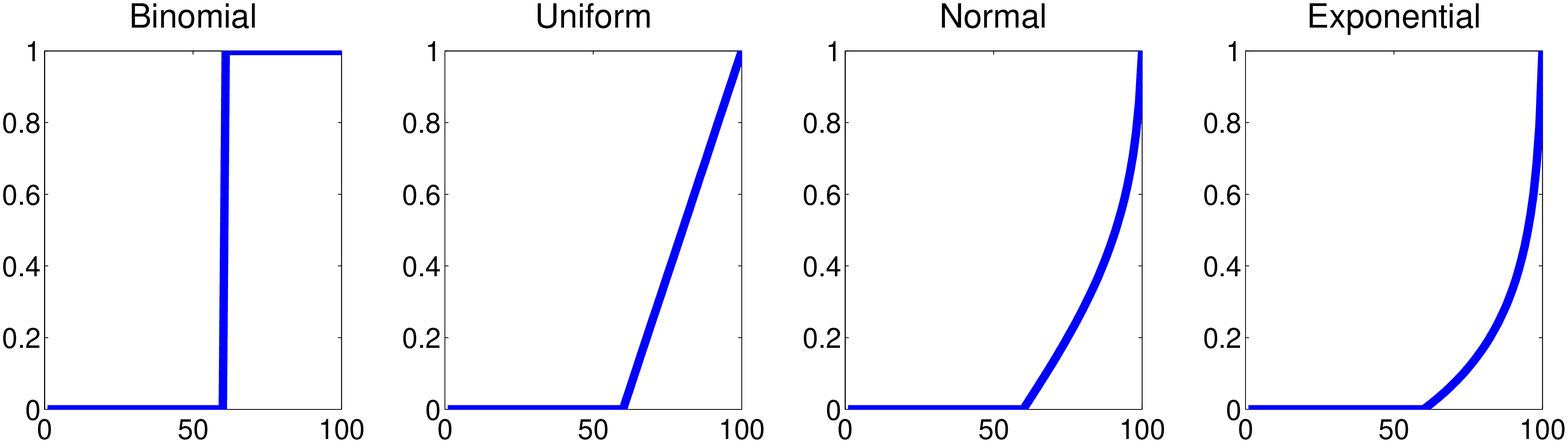}
\caption{General form of the absolute coefficients of vectors generated using binomial, uniform, normal and exponential distributions with $K = 40\%$ non-zeros, sorted in ascending order.}
\label{DataTypes}
\end{figure*}

\begin{figure*}[t!]
\centering
\includegraphics[scale=0.3]{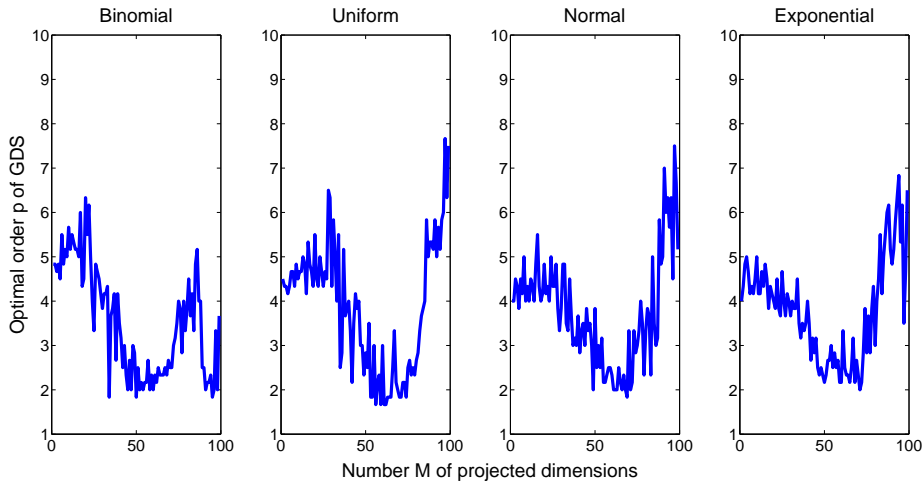}
\caption{Optimal order $p$ of GDS as a function of the number $M$ of projected dimensions.}
\label{Optimal_p_vs_M}
\end{figure*}

Having a closer inspection at each subfigure, we observe that regardless of both $K$ and data type, the reconstruction error has a similar form. More specifically, it is clear that in almost all cases
-- except for binomial and uniform with $K > 40\%$ --
values of $p$ in the range between $2$ and $7$ provide the best results, and this becomes more evident for small values of $M$. 
In this direction, our next concern was to quantify how the optimal $p$ varies as a function of $M$ and $K$ and Fig. \ref{Optimal_p_vs_M} and \ref{Optimal_p_vs_K} serve exactly this purpose.  
In Fig. \ref{Optimal_p_vs_M}, the horizontal axis depicts $M$, while the vertical axis contains the mean optimal $p$, as this has been calculated across the different values of $K$.
From this figure, it is clear that for small $M$ (approximately $<40$), the best reconstruction is obtained by setting $p$ between $4$ and $6$.
Moreover, it is worth noticing that after a small reduction of the mean optimal $p$ for intermediate values of $M$ (i.e., in the interval $40-80$), the superiority of high orders becomes more intensely evident for large values of $M$ (i.e., $>80$). However, it must be pointed out that for large $M$, the difference of reconstruction error among the several values of $p$ becomes negligible as the MSE becomes almost zero.
Finally, it is worth noticing that GI, which recall that is obtained for $p=1$, is never the optimal choice in reconstruction, justifying the use of higher orders and proving the superiority of GDS. 
In summary, the above findings explicitly demonstrate how GDS of high orders reduces the least number of projected dimensions required in order to perform almost perfect reconstruction of sparse signals.

 \begin{figure*}[t!]
\centering
\includegraphics[scale=0.3]{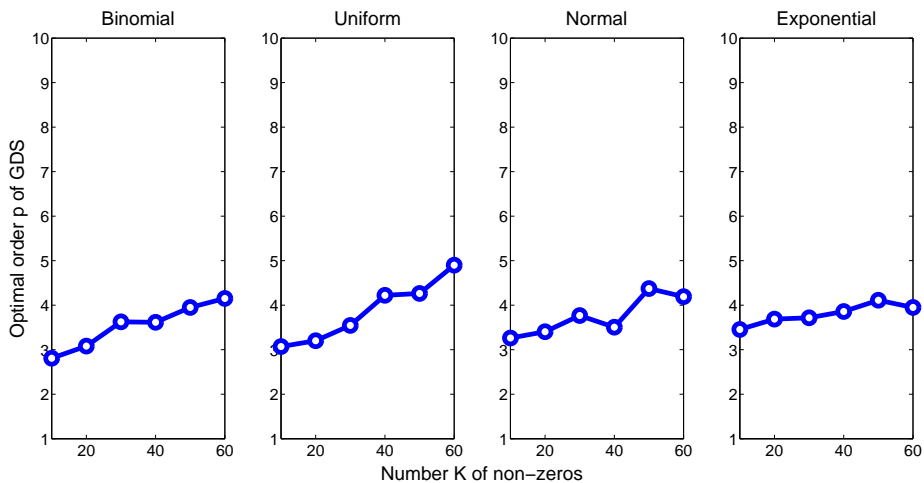}
\caption{Optimal order $p$ of GDS as a function of the number $K$ of non-zero coefficients.}
\label{Optimal_p_vs_K}
\end{figure*}

Similar is the case when we investigate the optimal $p$ as a function of the number $K$ of non-zero coefficients. 
In Fig. \ref{Optimal_p_vs_K}, the horizontal axis depicts $K$, while in the vertical axis is the mean optimal $p$, as this has been calculated this time across the different values of $M$. 
Again, it is interesting to observe that GI never offers the best reconstruction performance. 
Instead, for every $K$, orders of GDS larger than $3$ are needed.
In particular, as the sparsity of the data decreases, 
larger values of the order $p$ are required for better reconstructing a signal and on average $p=4$ provides the best results. 
This outcome can be attributed to the strictness that $p$ provides to GDS (cf. Theorems \ref{Increasing} and \ref{Limit}, Section \ref{GDSp}) and explicitly demonstrates how GDS loosens the bounds of the assumed sparsity of the original data offering more capacity in reconstructing lowly sparse signals.
Summarising the above results, 
GDS can undoubtedly substitute GI in sparse signal reconstruction.
This inference in conjunction with the proven prevalence of GI among other top performing 
sparsity metrics 
\cite{zonoobi2011gini} induces the superiority of GDS over the state-of-the-art.

Closing this section, it is important to stress that the previous experimental study offers a rule of thumb in deciding what is the optimal order $p$ of GDS in a certain compression-reconstruction scenario.
In this sense, Fig. \ref{MSEvsMeasurements_maxp_10} actually serves as a look-up table indicating the appropriate $p$ depending on the type and sparsity of the original data as well as the target compression level. The following comprises a concrete example demonstrating how the above rule of thumb could work. 
Consider that we have an $100$-dimensional signal containing $20$ zeros and whose non-zero coefficients have been generated by either the same or different normal distributions. Note that in the latter case, the coefficients should be normalised using the approach presented in Section \ref{NormalGDS}. Also consider that we would like to compress this signal to $10$ dimensions. Based on these settings, we refer to the subfigure lying on the third row (i.e., Normal), second column (i.e., $K = 20\%$) of Fig. \ref{MSEvsMeasurements_maxp_10}. 
In this subfigure, taking a horizontal cross section at value $10$ of $y$-axis (i.e., the target reduced dimension), we can find the least reconstruction error using the colourbar on the right. Obviously, this error corresponds to the optimal $p$, for the above settings, which in this specific example is approximately $4$.

\section{Conclusions}
\label{Conclusion}

The main contribution of this paper is a Generalised Differential Sparsity (GDS) framework for generating novel sparsity metrics.
The proposed framework accumulates a number of advantages.
First, it is characterised by the flexibility to generate metrics well-tailored to specific problem and data requirements.
Second, it has been shown that, in contrast to other sparsity metrics, GDS satisfies a set of benchmark criteria proving its credibility as an effective metric for measuring signal sparsity.
Third, it has been proven that GI constitutes a specific case of GDS demonstrating the generalisation power of the latter to unify already existing metrics.
Fourth, it can be calculated using alternative formulas with complementary computational advantages, therefore allowing for its efficient calculation under different settings.

The above features offer GDS a great potential as a general purpose framework regardless of the domain it is used.
As a matter of fact, in this paper, the above potential has been demonstrated within the context of Compressive Sampling (CS) in the process of reconstructing signals heavily compressed using random projections.
Along these lines, through an extensive experimental study on synthetic data whose coefficients are generated using binomial, uniform, normal and exponential distributions, 
GDS has proven to be more effective than GI in measuring sparsity, in terms of signal reconstruction capability.
More specifically, GDS in comparison to GI, loosens the assumptions of both the least number of projected dimensions and the inherent sparsity of the original data, required in order to almost perfectly reconstruct a compressed signal. 
The superiority of GDS against GI in conjunction with the fact that the latter has categorically outperformed other state-of-the-art sparsity metrics in signal reconstruction \cite{zonoobi2011gini}, places GDS in the pole position in the sparsity metric literature.

In the near future, we plan to replicate the experimental study, presented in this paper, on real data instead of synthetic. This will extend the impact of GDS to more types of data. Since often the coefficients of real data can be approximated and modelled by the four distributions used in this paper, we envisage that the results will be similar to the ones presented here.
Moreover, as GI occupies a leading position in the reconstruction of signals 
contaminated with noise \cite{zonoobi2011gini},
we also intend to investigate the effectiveness of GDS in the same problem.
Towards this direction, we anticipate that the privilege of GDS to contain an adjusting parameter has the potential to offer the appropriate robustness in the view of noise. 
Finally, although signal reconstruction is an extended field, adapting GDS in other case studies as well will reinforce its potential. To this end, based on the proven contribution of sparsity to optimisation problems, we plan to investigate how could GDS be applied in multi-objective optimisation, which is an emerging field with many applications.

\appendix
\label{Appendix}

\subsection{Proof of $P_3$ criterion}

\begin{theorem}
\label{RobinHoodTheorem}
GDS satisfies $P_3$: Robin Hood, i.e.:
\begin{equation}
\label{RobinHood}
S_p \left( \left[ c_1, \dots, c_i+\alpha, \dots, c_j-\alpha, \dots, c_N \right] \right) < S_p \left( \mathbf{c} \right),
\end{equation} 
for all $\alpha, c_i, c_j$, such that $c_j>c_i$ and $0<\alpha<\frac{c_j-c_i}{2}$.
\end{theorem}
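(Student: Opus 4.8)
The plan is to begin by invoking \emph{Permutation Invariance} ($P_2$): rather than the sorted expression of Definition~\ref{GDS_def}, I would work with the manifestly symmetric numerator $T(\mathbf{c}) := \sum_{1 \le k < l \le N} |c_k - c_l|^p$ and write $S_p(\mathbf{c}) = T(\mathbf{c}) / (N \| \mathbf{c} \|_p^p)$, where $\| \mathbf{c} \|_p^p = \sum_{i=1}^N c_i^p$. This matters because after the transfer the entry $c_i + \alpha$ may overtake some intermediate coefficients, so the sorted order is not preserved; the symmetric form sidesteps that entirely. Two structural observations then carry the argument. First, since $c'_k = c_k$ for every $k \notin \{ i, j \}$, the map $\mathbf{c} \mapsto \mathbf{c}'$ only alters pairwise terms that touch index $i$ or $j$; the term of the pair $(i,j)$ changes from $(c_j - c_i)^p$ to $(c_j - c_i - 2\alpha)^p$, a \emph{strict} decrease because $0 < \alpha < (c_j - c_i)/2$; and for each $k \notin \{ i,j \}$ the unordered pair $\{ c_k - c_i,\, c_k - c_j \}$ is replaced by $\{ c_k - c_i - \alpha,\, c_k - c_j + \alpha \}$, which has the same sum but strictly smaller range, so by convexity of $t \mapsto |t|^p$ its contribution does not increase. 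Hence $T(\mathbf{c}') < T(\mathbf{c})$. Second, by the same convexity reasoning applied to $\{ c_i, c_j \} \mapsto \{ c_i + \alpha, c_j - \alpha \}$, the denominator satisfies $\| \mathbf{c}' \|_p^p \le \| \mathbf{c} \|_p^p$.

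For $p = 1$ this already completes the proof: the denominator $\| \mathbf{c} \|_1$ is unchanged by the move (we add $\alpha$ and subtract $\alpha$), while the numerator strictly decreases, so $S_1(\mathbf{c}') < S_1(\mathbf{c})$.

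The real work is the case $p > 1$, in which the denominator \emph{also} strictly decreases, so ``numerator down'' no longer suffices. Rewriting $S_p(\mathbf{c}') < S_p(\mathbf{c})$ as the cross-multiplied statement $T(\mathbf{c}')\, \| \mathbf{c} \|_p^p < T(\mathbf{c})\, \| \mathbf{c}' \|_p^p$, the goal becomes to show that the relative decrease of $T$ strictly exceeds the relative decrease of $\| \cdot \|_p^p$. I would attack this by freezing all coordinates except those in positions $i$ and $j$: introduce the path $\mathbf{c}(\tau) = \mathbf{c} + \tau(\mathbf{e}_i - \mathbf{e}_j)$ for $\tau \in [0, \alpha]$ (so that $c_i(\tau) < c_j(\tau)$ throughout, $\mathbf{c}(0) = \mathbf{c}$ and $\mathbf{c}(\alpha) = \mathbf{c}'$), and prove that $\tau \mapsto S_p(\mathbf{c}(\tau))$ is non-increasing and strictly decreasing near $\tau = 0$. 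Both $T(\mathbf{c}(\tau))$ and $\| \mathbf{c}(\tau) \|_p^p$ turn out to be decreasing in $\tau$, and the claim reduces to a comparison of their logarithmic derivatives; the inputs to that comparison are the elementary estimates $(y - x)^p \le y^p - x^p$ for $y \ge x \ge 0$ and $x^p + (1-x)^p \le 1$ for $x \in [0,1]$, together with the scaling invariance $P_4$ to normalise, and a careful accounting of how the mass $2\alpha$ transferred inside the pair $(i,j)$ redistributes simultaneously across all of the pairwise terms and across the power sum in the denominator.

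The main obstacle is precisely this quantitative comparison of the two relative decrements. It is tempting to bound the numerator's decrease from below by a crude convexity estimate, but one can exhibit vectors --- already with $N = 2$ and widely separated coefficients --- for which any bound that forgets that $c_j - c_i - 2\alpha$ is small compared with $c_j - c_i$ is far too lossy to conclude. The proof therefore has to track the redistribution of the transferred mass across \emph{every} term at once and balance it against the (proportionally smaller) decrease of the denominator, and it is this simultaneous bookkeeping that makes the argument long enough to be deferred to the appendix.
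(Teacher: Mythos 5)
Your setup is sound and in places cleaner than the paper's: working with the symmetric numerator $T(\mathbf{c})=\sum_{k<l}|c_k-c_l|^p$ dispenses with the re-indexing table the paper needs in order to track how $c_i+\alpha$ and $c_j-\alpha$ migrate through the sorted order, and your convexity argument (each pair $\{c_k-c_i,\,c_k-c_j\}$ keeps its sum but shrinks its range, so its contribution to $T$ cannot increase, while the $(i,j)$ term strictly drops) correctly yields $T(\mathbf{c}')<T(\mathbf{c})$; this is the same content as the paper's term-by-term sign analysis of $\partial F/\partial\alpha$, obtained without differentiation. The $p=1$ case is complete, since there the denominator is invariant under the transfer.

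For $p>1$, however, your proposal stops exactly where a proof is required. You correctly observe that $\|\mathbf{c}'\|_p^p<\|\mathbf{c}\|_p^p$, so the claim is the cross-multiplied inequality $T(\mathbf{c}')\,\|\mathbf{c}\|_p^p<T(\mathbf{c})\,\|\mathbf{c}'\|_p^p$, i.e.\ that the numerator's relative decrease dominates the denominator's. What follows in your write-up is only a plan --- a path $\mathbf{c}(\tau)$, a ``comparison of logarithmic derivatives,'' and two elementary inequalities that are never combined into the needed bound --- and you yourself concede that this quantitative bookkeeping is the unresolved obstacle. Even in the case $N=2$, where $S_p=\frac{(c_2-c_1)^p}{2(c_1^p+c_2^p)}$, closing the argument requires an explicit computation (writing $c_{1,2}=1\mp t$ and checking that $t\mapsto\log\frac{(2t)^p}{(1-t)^p+(1+t)^p}$ is increasing, which reduces to $(1+t)^{p-1}+(1-t)^{p-1}>0$); nothing in your proposal carries out the analogous computation for general $N$, where the transferred mass enters every pairwise term at once. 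So there is a genuine gap: the central step of the $p>1$ case is announced but not proved. It is worth noting that the paper's appendix proof diverges from yours at precisely this point --- it differentiates only the $\alpha$-dependent part $F(\alpha)$ of the numerator and treats the denominator $N\sum_k d_k^p$, which contains $(c_i+\alpha)^p+(c_j-\alpha)^p$ and is itself decreasing in $\alpha$, as if it were constant --- so you have put your finger on a real issue that the paper glosses over; but identifying the difficulty is not the same as resolving it, and as written your argument does not establish the theorem for $p>1$.
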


\begin{table*}
\caption{Correspondence between the indices $c_i$ before and $d_i$ after a Robin Hood operation.}
\label{RobinHoodIndices}
\begin{equation*} \label{Indices}
\left( \begin{array}{c c c c c c c c c c c c c c c c c}
d_1 & \cdots & d_{i-1} & d_i      & \cdots & d_{i+m-1} & d_{i+m} & d_{i+m+1} & \cdots & d_{j-n-1} & d_{j-n} & d_{j-n+1} & \cdots & d_j & d_{j+1} & \cdots & d_N \\
\\
c_1 & \cdots & c_{i-1} & c_{i+1} & \cdots & c_{i+m} & c_i + \alpha & c_{i+m+1} & \cdots & c_{j-n-1} & c_j - \alpha & c_{j-n} & \cdots & c_{j-1} & c_{j+1} & \cdots & c_N \\
\end{array} \right)\,
\end{equation*}
\end{table*}

\begin{proof}

Let $d_1 \le d_2 \le \dots \le d_N$ be the sorted coefficients obtained after a Robin Hood operation on $\mathbf{c}$. Then we have:
\begin{equation}
d_k = \left\{ \begin{array}{l l}
c_k, & \, \, 1 \le k \le i-1\\
c_{k+1} & i \le k \le i + m - 1\\
c_i + \alpha & k = i + m\\
c_k & i + m + 1 \le k \le j - n - 1\\
c_j - \alpha & k = j - n\\
c_{k-1} & j - n + 1 \le k \le j\\
c_k & j + 1 \le k \le N
 \end{array}
 \right. \,.
\end{equation}
For further clarity see also Table \ref{RobinHoodIndices}.
Clearly proving eq. (\ref{RobinHood}) is equivalent to proving that
\begin{equation}
\label{PartialS}
\frac{\partial S_p \left( \left[ c_1, \dots, c_i-\alpha, \dots, c_j+\alpha, \dots, c_N \right] \right)}{\partial \alpha} < 0.
\end{equation} 
Expanding $S_p \left( \left[ c_1, \dots, c_i-\alpha, \dots, c_j+\alpha, \dots, c_N \right] \right)$ and eliminating all terms that do not contain $\alpha$, inequality (\ref{PartialS}) reduces to 
\begin{equation}
\frac{\partial F(\alpha)}{\partial \alpha} < 0,
\end{equation} 
where
\begin{multline*}
F(\alpha) = \sum_{k=1}^{i+m-1} \! \! \! \! \left( d_{i+m} - d_k \right)^p + \! \! \! \! \! \sum_{k=i+m+1}^N \! \! \! \! \! \! \! \left( d_k - d_{i+m} \right)^p
\\ 
+ \sum_{k=1}^{j-n-1} \! \! \! \! \left( d_{j-n} - d_k \right)^p + \! \! \! \! \! \sum_{k=j-n+1}^N \! \! \! \! \! \! \! \left( d_k - d_{j-n} \right)^p 
\\
= \sum_{\substack{k=1 \\ k \ne i}}^{i+m} \! \left[ ( c_i+\alpha ) - c_k \right]^p + \! \! \! \! \! \sum_{k=i+m+1}^{j-n-1} \! \! \! \! \! \! \! \left[ c_k - (c_i+\alpha) \right]^p
\\ 
+ \left[ (c_j - \alpha) - (c_i + \alpha) \right]^p 
\\ 
+ \sum_{\substack{k=j-n \\ k \ne j}}^N \! \! \! \! \left[ c_k-(c_i+\alpha) \right]^p + \sum_{\substack{k=1 \\ k \ne i}}^{i+m} \! \left[ (c_j- \alpha) - c_k \right]^p
\\ 
+ \left[ (c_j- \alpha) - (c_i + \alpha) \right]^p
\\ 
+ \sum_{k=i+m+1}^{j-n-1} \! \! \! \! \! \! \! \left[ (c_j - \alpha) - c_k \right]^p + \! \! \! \sum_{\substack{k=j-n \\ k \ne j}}^N \! \! \! \! \left[ c_k - (c_j - \alpha) \right]^p
\\
= \sum_{\substack{k=1 \\ k \ne i}}^{i+m} \! \Bigg\{ \left[ ( c_i+\alpha ) - c_k \right]^p + \left[ (c_j- \alpha) - c_k \right]^p \Bigg\} 
\\
+ \sum_{k=i+m+1}^{j-n-1} \! \Bigg\{ \left[ c_k - (c_i+\alpha) \right]^p + \left[ (c_j - \alpha) - c_k \right]^p \Bigg\}
\\
+ \sum_{\substack{k=j-n \\ k \ne j}}^N \! \Bigg\{ \left[ c_k-(c_i+\alpha) \right]^p + \left[ c_k - (c_j - \alpha) \right]^p \Bigg\} 
\\
+ 2 \left[ (c_j - \alpha) - (c_i + \alpha) \right]^p.
\end{multline*}
Hence,
\begin{multline*}
\frac{\partial F}{\partial \alpha} = 
p \sum_{\substack{k=1 \\ k \ne i}}^{i+m} \Bigg\{ \left[ (c_i + \alpha) - c_k \right]^{p-1} - \left[ (c_j - \alpha) - c_k \right]^{p-1} \Bigg\} 
\\
-p \sum_{k=i+m+1}^{j-n-1} \Bigg\{ \left[ c_k - (c_i + \alpha) \right]^{p-1} + \left[ (c_j - \alpha) - c_k \right]^{p-1} \Bigg\}
\\ 
-p \sum_{\substack{k=j-n \\ k \ne j}}^N \Bigg\{ \left[ c_k - (c_i + \alpha) \right]^{p-1} - \left[ c_k - (c_j - \alpha) \right]^{p-1} \Bigg\}
\\
-4p \left[ c_j - c_i -2\alpha \right]^{p-1}.
\\
\end{multline*}

But 
\begin{multline}
\label{D1_Assumption}
0 < \alpha < \frac{c_j - c_i}{2} \Leftrightarrow 0 < 2 \alpha < c_j - c_i
\\ 
\Leftrightarrow c_j - \alpha > c_i + \alpha \Leftrightarrow (c_j - \alpha) - c_k > (c_i + \alpha) - c_k
\end{multline}
For $k \in \{ 1, 2, \dots, i+m \}$, we have:
\begin{multline*}
(c_j - \alpha) - c_k > (c_i + \alpha) - c_k > 0 \Rightarrow 
\\
p \sum_{\substack{k=1 \\ k \ne i}}^{i+m} \Bigg\{ \left[ (c_i + \alpha) - c_k \right]^{p-1} - \left[ (c_j - \alpha) - c_k \right]^{p-1} \Bigg\} < 0.
\end{multline*}
Similarly, for $k \in \{ j-n, j-n+1, \dots, N \}$:
\begin{multline*}
0 > (c_j - \alpha) - c_k > (c_i + \alpha) - c_k \Rightarrow
\\
0 < c_k - (c_j - \alpha) < c_k - (c_i + \alpha) \Rightarrow
\\
-p \sum_{\substack{k=j-n \\ k \ne j}}^N \Bigg\{ \left[ c_k - (c_i + \alpha) \right]^{p-1} - \left[ c_k - (c_j - \alpha) \right]^{p-1} \Bigg\} < 0.
\end{multline*}
Moreover, for $k \in \{ i+m+1, j-n-1, \dots, N \}$, both $c_k - (c_i + \alpha)$ and $(c_j - \alpha) - c_k$ are positive, which implies that 
\begin{equation}
-p \sum_{k=i+m+1}^{j-n-1} \Bigg\{ \left[ c_k - (c_i + \alpha) \right]^{p-1} + \left[ (c_j - \alpha) - c_k \right]^{p-1} \Bigg\} > 0.
\end{equation}
Finally, obviously 
\begin{equation}
-4p \left[ c_j - c_i -2\alpha \right]^{p-1} < 0.
\end{equation}
Hence, all terms of $\frac{\partial F}{\partial \alpha}$ are negative, therefore $\frac{\partial F}{\partial \alpha} < 0$, which 
completes the proof.

\end{proof}

\subsection{A computationally more efficient GDS formula for even values of $p$}

\begin{theorem}
\begin{multline*}
S_{2k} (\mathbf{c}) = 1 + \frac{1}{N\| \mathbf{c} \|_{2k}^{2k}} \Bigg[ \sum_{\omega=1}^{k-1} (-1)^{\omega} 
\binom{2k}{\omega} \| \mathbf{c} \|_{\omega}^{\omega} \| \mathbf{c} \|_{2k - \omega}^{2k - \omega}
\\ 
+ \frac{(-1)^k}{2} \binom{2k}{k} \| \mathbf{c} \|_k^{2k} \Bigg]
\end{multline*}
\end{theorem}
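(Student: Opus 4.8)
The plan is to start from Definition \ref{GDS_def}, expand the power $(c_j-c_i)^{2k}$ inside the double sum with the binomial theorem, and then recognise each resulting sum over pairs as a product of two power sums $\|\mathbf{c}\|_\omega^\omega=\sum_i c_i^\omega$, up to elementary boundary corrections. Concretely, writing $\sum_{i<j}$ for $\sum_{i=1}^{N-1}\sum_{j=i+1}^N$ and expanding, I would first obtain
\[
S_{2k}(\mathbf{c})=\frac{1}{N\|\mathbf{c}\|_{2k}^{2k}}\sum_{i<j}(c_j-c_i)^{2k}=\frac{1}{N\|\mathbf{c}\|_{2k}^{2k}}\sum_{\omega=0}^{2k}(-1)^\omega\binom{2k}{\omega}T_\omega,\qquad T_\omega:=\sum_{i<j}c_i^\omega c_j^{2k-\omega},
\]
after interchanging the order of summation. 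Note that keeping the $\omega=0$ and $\omega=2k$ terms in the forms $c_j^{2k}$ and $c_i^{2k}$ respectively avoids any $0^0$ ambiguity.

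The key step is to relate $T_\omega$ to power sums. Expanding the full (unrestricted) product and splitting into $i<j$, $i>j$, $i=j$, then relabelling the middle block, gives for every $\omega$
\[
\|\mathbf{c}\|_\omega^\omega\,\|\mathbf{c}\|_{2k-\omega}^{2k-\omega}=\Big(\sum_i c_i^\omega\Big)\Big(\sum_j c_j^{2k-\omega}\Big)=T_\omega+T_{2k-\omega}+\|\mathbf{c}\|_{2k}^{2k}.
\]
Since $2k$ is even, $(-1)^\omega\binom{2k}{\omega}=(-1)^{2k-\omega}\binom{2k}{2k-\omega}$, so I would pair index $\omega$ with $2k-\omega$ in the sum. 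For $1\le\omega\le k-1$ each pair contributes $(-1)^\omega\binom{2k}{\omega}\big(\|\mathbf{c}\|_\omega^\omega\|\mathbf{c}\|_{2k-\omega}^{2k-\omega}-\|\mathbf{c}\|_{2k}^{2k}\big)$; the self-paired middle term $\omega=k$ uses $2T_k+\|\mathbf{c}\|_{2k}^{2k}=\|\mathbf{c}\|_k^{2k}$, i.e. $T_k=\tfrac12(\|\mathbf{c}\|_k^{2k}-\|\mathbf{c}\|_{2k}^{2k})$; and the endpoint $\omega\in\{0,2k\}$ uses that each $c_\ell^{2k}$ occurs exactly $N-1$ times in $T_0+T_{2k}$, hence $T_0+T_{2k}=(N-1)\|\mathbf{c}\|_{2k}^{2k}$. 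Collecting everything, the bracket becomes
\[
(N-1)\|\mathbf{c}\|_{2k}^{2k}+\sum_{\omega=1}^{k-1}(-1)^\omega\binom{2k}{\omega}\|\mathbf{c}\|_\omega^\omega\|\mathbf{c}\|_{2k-\omega}^{2k-\omega}-\Big(\sum_{\omega=1}^{k-1}(-1)^\omega\binom{2k}{\omega}+\frac{(-1)^k}{2}\binom{2k}{k}\Big)\|\mathbf{c}\|_{2k}^{2k}+\frac{(-1)^k}{2}\binom{2k}{k}\|\mathbf{c}\|_k^{2k}.
\]

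To finish, I would invoke the elementary identity $\sum_{\omega=1}^{k-1}(-1)^\omega\binom{2k}{\omega}+\frac{(-1)^k}{2}\binom{2k}{k}=-1$, which itself follows by applying the same $\omega\leftrightarrow 2k-\omega$ pairing to $0=(1-1)^{2k}=\sum_{\omega=0}^{2k}(-1)^\omega\binom{2k}{\omega}$ (isolating $\omega=0,2k$ gives $2+2\sum_{\omega=1}^{k-1}(-1)^\omega\binom{2k}{\omega}+(-1)^k\binom{2k}{k}=0$). This turns the coefficient of $\|\mathbf{c}\|_{2k}^{2k}$ into $(N-1)-(-1)=N$, so dividing through by $N\|\mathbf{c}\|_{2k}^{2k}$ produces exactly the claimed formula, with the leading $1$ coming from the $N\|\mathbf{c}\|_{2k}^{2k}$ term. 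The routine parts are the binomial expansion and the term collection; the one place demanding care is the three-way case split of the pairing (generic $\omega$, the self-paired middle $\omega=k$, the endpoints $\omega\in\{0,2k\}$) together with the multiplicity $N-1$ in the endpoint term, since an error there shifts the constant multiplying $\|\mathbf{c}\|_{2k}^{2k}$ and hence the additive $1$. No external input beyond Definition \ref{GDS_def} is needed, as the supporting binomial identity is obtained by the same argument.
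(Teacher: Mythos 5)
Your proposal is correct and follows essentially the same route as the paper's own proof: binomial expansion of $(c_j-c_i)^{2k}$, pairing $\omega$ with $2k-\omega$ via the symmetry of the binomial coefficients, converting the restricted pair sums into full products of power sums with diagonal corrections, and closing with the identity $\sum_{\omega=1}^{k-1}(-1)^{\omega}\binom{2k}{\omega}+\tfrac{(-1)^k}{2}\binom{2k}{k}=-1$. The $T_\omega$ bookkeeping is a cleaner packaging of the same computation, and all the delicate points (the $N-1$ multiplicity at the endpoints and the halved middle term) are handled correctly.
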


\begin{proof}
From the definition of GDS (see eq. \ref{GDS_definition}), we have:
\begin{equation*}
S_{2k} (\mathbf{c}) = \frac{1}{N\| \mathbf{c} \|_{2k}^{2k}} \alpha, 
\end{equation*}
where
\begin{multline*}
\alpha = \sum_{i=1}^{N-1} \! \sum_{j=i+1}^N (c_j - c_i)^{2k} \\ 
= \sum_{i=1}^{N-1} \sum_{j=i+1}^N \sum_{\omega = 0}^{2k} (-1)^{\omega} \binom{2k}{\omega} c_j^{\omega} c_i^{2k - \omega} \\
= \sum_{i=1}^{N-1} \sum_{j=i+1}^N \Bigg[ (c_j^{2k} + c_i^{2k}) + \sum_{\omega=1}^{k-1} (-1)^{\omega} \binom{2k}{\omega} c_j^{\omega} c_i^{2k - \omega} 
\\
+ (-1)^k \binom{2k}{k} c_j^k c_i^k + \sum_{\omega = k + 1}^{2k - 1} (-1)^{\omega} \binom{2k}{\omega} c_j^{\omega} c_i^{2k - \omega} \Bigg] \\
= \sum_{i=1}^{N-1} \sum_{j=i+1}^N c_j^{2k} + \sum_{i=1}^{N-1} \sum_{j=i+1}^N c_i^{2k}
\\ 
+ \sum_{i=1}^{N-1} \sum_{j=i+1}^N \sum_{\omega = 1}^{k-1} \Bigg[ (-1)^{\omega} \binom{2k}{\omega} (c_j^{\omega} c_i^{2k - \omega} + c_j^{2k - \omega} c_i^{\omega}) \Bigg] 
\\
+ \sum_{i=1}^{N-1} \sum_{j=i+1}^N \Bigg[ (-1)^k \binom{2k}{k} c_j^k c_i^k \Bigg]
\\
= \sum_{i=1}^N (i-1) c_i^{2k} + \sum_{i=1}^N (N-i) c_i^{2k} + \sum_{\omega = 1}^{k-1} \Bigg\{ (-1)^{\omega} \binom{2k}{\omega} 
\\
\Bigg[ \sum_{i=1}^{N-1} \sum_{j=i+1}^N (c_j^{\omega} c_i^{2k - 
\omega} + c_j^{2k - \omega} c_i^{\omega} ) + \sum_{i=1}^N c_i^{2k} - \sum_{i=1}^N c_i^{2k} \Bigg] \Bigg\} 
\\
+ \frac{(-1)^k}{2} \binom{2k}{k} \Bigg[ 2 \sum_{i=1}^{N-1} \sum_{j=i+1}^N c_j^k c_i^k + \sum_{i=1}^N 
c_i^{2k} - \sum_{i=1}^N c_i^{2k} \Bigg]
\\
= (N-1) \sum_{i=1}^N c_i^{2k} + \sum_{\omega = 1}^{k-1} \Bigg\{ (-1)^{\omega} \binom{2k}{\omega} 
\\
\Bigg[ \sum_{i=1}^N \sum_{j=1}^N c_j^{\omega} c_i^{2k - \omega} - \sum_{i=1}^N c_i^{2k} \Bigg] \Bigg\} 
\\
+ \frac{(-1)^k}{2} \binom{2k}{k} \Bigg[ \sum_{i=1}^N \sum_{j=1}^N c_j^k c_i^k - \sum_{i=1}^N c_i^{2k} 
\Bigg] 
\\
= N \sum_{i=1}^N c_i^{2k} + \sum_{\omega = 1}^{k-1} (-1)^{\omega} \binom{2k}{\omega} 
\Bigg[ \sum_{i=1}^N c_i^{2k - \omega} \sum_{j=1}^N c_j^{\omega} \Bigg]
\\ 
+ \frac{(-1)^k}{2} \binom{2k}{k} \Bigg[ \sum_{i=1}^N c_i^k \sum_{j=1}^N c_j^k \Bigg] 
\\
- \Bigg[ 1 + \sum_{\omega=1}^{k-1} (-1)^{\omega} 
\binom{2k}{\omega} - \frac{(-1)^k}{2} \binom{2k}{k} \Bigg] \sum_{i=1}^N c_i^{2k} 
\\
= \! \! N \| \mathbf{c} \|_{2k}^{2k} 
+ \sum_{\omega = 1}^{k-1} (-1)^{\omega} \! \binom{2k}{\omega} \! \| \mathbf{c} \|_{\omega}^{\omega} \| \mathbf{c} \|
_{2k - \omega}^{2k - \omega} + \frac{(-1)^k}{2} \! \binom{2k}{k} \! \| \mathbf{c} \|_k^{2k},
\end{multline*}
since 
\begin{multline*}
1 + \sum_{\omega = 1}^{k-1} (-1)^{\omega} \binom{2k}{\omega} - \frac{(-1)^k}{2} \binom{2k}{k} 
\\
= \sum_{\omega = 0}^k (-1)^{\omega} \binom{2k}{\omega} + \frac{(-1)^k}{2} \binom{2k}{k} = 0.
\end{multline*}
Notice that in the above proof, we made use of the identity $\binom{2k}{\omega} = \binom{2k}{2k - \omega}$.
Therefore, 
\begin{multline*}
S_{2k} (\mathbf{c}) = \frac{1}{N\| \mathbf{c} \|_{2k}^{2k}} \alpha \\
= 1 + \frac{1}{N\| \mathbf{c} \|_{2k}^{2k}} \Bigg[ \sum_{\omega=1}^{k-1} (-1)^{\omega} 
\binom{2k}{\omega} \| \mathbf{c} \|_{\omega}^{\omega} \| \mathbf{c} \|_{2k - \omega}^{2k - \omega}
\\ 
+ \frac{(-1)^k}{2} \binom{2k}{k} \| \mathbf{c} \|_k^{2k} \Bigg] \\
\end{multline*}
\end{proof}

\subsection{A computationally more efficient GDS formula for odd values of $p$}

\begin{theorem}
\begin{equation*}
S_{2k+1}(\mathbf{c}) = \frac{1}{N\| \mathbf{c} \|_{2k+1}^{2k+1}} \gamma 
\end{equation*}
where 
\begin{equation}
\gamma =  \sum_{\omega = 0}^k (-1)^{\omega} \binom{2k+1}{\omega} \sum_{i=1}^N \left( c_i^{2k+1- \omega} f_{\omega}(i) - c_i^{\omega} f_{2k+1- \omega}(i) \right)
\end{equation}
and
\begin{equation}
f_{\omega}(i) = \sum_{j=1}^i c_j^{\omega}
\end{equation}
\end{theorem}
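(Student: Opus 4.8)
The plan is to start from the defining formula~(\ref{GDS_definition}) with $p=2k+1$. Since $\sum_{i}c_i^{2k+1}=\|\mathbf{c}\|_{2k+1}^{2k+1}$, we have $S_{2k+1}(\mathbf{c})=\frac{1}{N\|\mathbf{c}\|_{2k+1}^{2k+1}}\,\alpha$ with $\alpha=\sum_{i=1}^{N-1}\sum_{j=i+1}^{N}(c_j-c_i)^{2k+1}$, so it suffices to show $\alpha=\gamma$. First I would expand the inner power by the binomial theorem, $(c_j-c_i)^{2k+1}=\sum_{\omega=0}^{2k+1}(-1)^{\omega}\binom{2k+1}{\omega}c_j^{2k+1-\omega}c_i^{\omega}$, and interchange the order of summation to obtain $\alpha=\sum_{\omega=0}^{2k+1}(-1)^{\omega}\binom{2k+1}{\omega}\sum_{i<j}c_j^{2k+1-\omega}c_i^{\omega}$.

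The key step is to fold the $\omega$-range in half using the oddness of the exponent. Splitting the $\omega$-sum into $\{0,\dots,k\}$ and $\{k+1,\dots,2k+1\}$ and substituting $\omega\mapsto 2k+1-\omega$ in the second block, the identities $\binom{2k+1}{\omega}=\binom{2k+1}{2k+1-\omega}$ and $(-1)^{2k+1-\omega}=-(-1)^{\omega}$ turn that block into $-\sum_{\omega=0}^{k}(-1)^{\omega}\binom{2k+1}{\omega}\sum_{i<j}c_j^{\omega}c_i^{2k+1-\omega}$. Adding the two blocks gives $\alpha=\sum_{\omega=0}^{k}(-1)^{\omega}\binom{2k+1}{\omega}\big(\sum_{i<j}c_j^{2k+1-\omega}c_i^{\omega}-\sum_{i<j}c_j^{\omega}c_i^{2k+1-\omega}\big)$. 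Unlike the even case, where a middle term survives and carries the factor $\tfrac12$, here the exponents pair up exactly as $(0,2k+1),(1,2k),\dots,(k,k+1)$, so there is no leftover term.

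It then remains to rewrite each inner double sum through the prefix sums $f_{\omega}(i)=\sum_{j=1}^{i}c_j^{\omega}$. Placing the larger index on the outside, $\sum_{i<j}c_j^{a}c_i^{b}=\sum_{\ell=1}^{N}c_\ell^{a}\big(f_b(\ell)-c_\ell^{b}\big)$. Substituting this into the bracket above for $(a,b)=(2k+1-\omega,\omega)$ and $(a,b)=(\omega,2k+1-\omega)$, the two correction terms $c_\ell^{2k+1-\omega}c_\ell^{\omega}$ and $c_\ell^{\omega}c_\ell^{2k+1-\omega}$ are each equal to $c_\ell^{2k+1}$ and cancel, leaving exactly $\sum_{\ell=1}^{N}\big(c_\ell^{2k+1-\omega}f_\omega(\ell)-c_\ell^{\omega}f_{2k+1-\omega}(\ell)\big)$. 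Re-summing over $\omega$ with the signs $(-1)^{\omega}$ and coefficients $\binom{2k+1}{\omega}$ reproduces $\gamma$; dividing by $N\|\mathbf{c}\|_{2k+1}^{2k+1}$ then completes the proof.

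The main obstacle I anticipate is purely bookkeeping rather than any conceptual difficulty: keeping the sign flip from the odd exponent straight while folding the binomial sum, and verifying carefully that the passage from the strict sums $\sum_{i<j}$ to the prefix-sum form introduces only diagonal corrections that cancel in pairs. No analytic input (limits, inequalities) is needed; everything is a finite exact algebraic identity, in the same spirit as the even-case derivation.
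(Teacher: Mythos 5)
Your proposal is correct and follows essentially the same route as the paper: binomial expansion of $(c_j-c_i)^{2k+1}$, folding the $\omega$-sum onto $\{0,\dots,k\}$ via $\binom{2k+1}{\omega}=\binom{2k+1}{2k+1-\omega}$ and the sign flip from the odd exponent, and conversion of the triangular sums into the prefix sums $f_\omega$. Your treatment is in fact tidier than the paper's, which singles out the $\omega=0$ term and reduces it through the identity $(N+1)\|\mathbf{c}\|_{2k+1}^{2k+1}=\sum_i f_{2k+1}(i)+\sum_i i\,c_i^{2k+1}$, whereas you handle all $\omega$ uniformly and let the diagonal corrections cancel in pairs.
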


\begin{proof}
Again, from the definition of GDS (see eq. \ref{GDS_definition}), we have:
\begin{equation}
\label{2k_1}
S_{2k+1} (\mathbf{c}) = \frac{1}{N\| \mathbf{c} \|_{2k+1}^{2k+1}} \beta,
\end{equation}
where
\begin{multline*}
\beta = \sum_{i=1}^{N-1} \! \sum_{j=i+1}^N (c_j - c_i)^{2k+1} \\
= \sum_{i=1}^{N-1} \sum_{j=i+1}^N \sum_{\omega = 0}^{2k+1} (-1)^{\omega} \binom{2k+1}{\omega} c_j^{2k+1- \omega} c_i^{\omega} 
\\
= \sum_{i=1}^{N-1} \sum_{j=i+1}^N \Big[ c_j^{2k+1} - \binom{2k+1}{1} c_j^{2k} c_i + \dots 
\\
- \binom{2k+1}{2} c_j^2 c_i^{2k+1} + \binom{2k+1}{1} c_j c_i^{2k} - c_i^{2k+1} \Big]
\\
= \sum_{i=1}^{N-1} \sum_{j=i+1}^N (c_j^{2k+1} - c_i^{2k+1} )
\\ 
- \binom{2k+1}{1} \sum_{i=1}^{N-1} \sum_{j=i+1}^N (c_j^{2k} c_i - c_j c_i^{2k} )
\\ 
+ \dots + (-1)^k \binom{2k+1}{k} \sum_{i=1}^{N-1} \sum_{j=i+1}^N (c_j^{k+1} c_i^k - c_j^k c_i^{k+1} )
\\
= 2 \sum_{i=1}^N c_i^{2k+1} f_0(i) - (N+1) \| \mathbf{c} \|_{2k+1}^{2k+1}
\\ 
- \binom{2k+1}{1} \sum_{i=1}^N \left( c_i^{2k} f_1(i) - c_i f_{2k}(i) \right)
\\ 
+ \dots + (-1)^k \binom{2k+1}{k} \sum_{i=1}^N \left( c_i^{k+1} f_k(i) - c_i^k f_{k+1}(i) \right),
\end{multline*}
But, 
\begin{multline*}
\sum_{i=1}^N c_i^0 f_{2k+1}(i) = \sum_{i=1}^N \sum_{j=1}^i c_j^{2k+1} = \sum_{i=1}^N (N+1-i) c_i^{2k+1}
\\
= (N+1) \sum_{i=1}^N c_i^{2k+1} - \sum_{i=1}^N i c_i^{2k+1}  
\\
\Leftrightarrow (N+1) \| \mathbf{c} \|_{2k+1}^{2k+1} = \sum_{i=1}^N c_i^0 f_{2k+1}(i) + \sum_{i=1}^N i c_i^{2k+1}
\end{multline*}
Therefore (\ref{2k_1}) becomes:
\begin{multline*}
2 \sum_{i=1}^N c_i^{2k+1} f_0(i) - \sum_{i=1}^N c_i^0 f_{2k+1}(i) - \sum_{i=1}^N i c_i^{2k+1}
\\ 
+ \sum_{\omega = 0}^k (-1)^{\omega} \binom{2k+1}{\omega} \sum_{i=1}^N (c_i^{2k+1- \omega} f_{\omega}(i) - c_i^{\omega} f_{2k+1- \omega}(i)
\\
= \sum_{i=1}^N c_i^{2k+1} f_0(i) - \sum_{i=1}^N i c_i^{2k+1}
\\ 
+ \sum_{\omega = 0}^k \! \Bigg[ (-1)^{\omega} \binom{2k+1}{\omega} \! \sum_{i=1}^N \left( c_i^{2k+1- \omega} f_{\omega}(i) - c_i^{\omega} f_{2k+1- \omega}(i) \right) \! \Bigg]
\\
= \! \sum_{\omega = 0}^k \! \Bigg[ (-1)^{\omega} \! \binom{2k+1}{\omega} \! \sum_{i=1}^N \! \left( c_i^{2k+1- \omega} f_{\omega}(i) - c_i^{\omega} f_{2k+1- \omega}(i) \right) \! \Bigg]
\end{multline*}
\end{proof}

\ifCLASSOPTIONcaptionsoff
  \newpage
\fi

\bibliographystyle{IEEEtran}
\bibliography{references}

\end{document}